\newtheorem{theorem}{Theorem}
\renewcommand{\eqref}[1]{Equation (\ref{#1})}
\begin{document}

\title{Joint User Association, Interference Cancellation and Power Control for Multi-IRS Assisted UAV Communications}

\markboth{IEEE Transaction on Wireless Communications,~Vol.~1, No.~1}

\maketitle
\begin{abstract}
Intelligent reflecting surface (IRS)-assisted unmanned aerial vehicle (UAV) communications are expected to alleviate the load of ground base stations in a cost-effective way. Existing studies mainly focus on the deployment and resource allocation of a single IRS instead of multiple IRSs, whereas it is extremely challenging for joint multi-IRS multi-user association in UAV communications with constrained reflecting resources and dynamic scenarios. To address the aforementioned challenges, we propose a new optimization algorithm for joint IRS-user association, trajectory optimization of UAVs, successive interference cancellation (SIC) decoding order scheduling and power allocation to maximize system energy efficiency. We first propose an inverse soft-Q learning-based algorithm to optimize multi-IRS multi-user association. Then, SCA and Dinkelbach-based algorithm are leveraged to optimize UAV trajectory followed by the optimization of SIC decoding order scheduling and power allocation. Finally, theoretical analysis and performance results show significant advantages of the designed algorithm in convergence rate and energy efficiency.
\end{abstract}

\begin{IEEEkeywords}
Intelligent reflecting surface, UAV communications, user association, trajectory optimization, and inverse soft-Q learning.
\end{IEEEkeywords}

\IEEEpeerreviewmaketitle
\IEEEpeerreviewmaketitle

\section{Introduction}
\IEEEPARstart{I}{n} recent years, the communications demands of wireless devices have surged, which poses a significant challenge to base stations (BSs) due to their overloaded traffic. This can also cause network congestion and reduce quality of service for users. Unmanned aerial vehicles (UAVs) play an important role in six-generation (6G) networks. Due to their flexibility, low costs and easy deployment, UAVs are widely used in wireless communication systems to build temporary networks for areas with weak network coverage or congestion. Compared to traditional wireless communications, UAVs as aerial BSs can realize better communication performance by staying close to users. 

\indent While UAVs bring the above benefits to wireless communication networks, they also bring new challenges. Since the wireless channels between UAVs and ground users are dynamic and unpredictable, the uncertainty of the signal propagation environment leads to the possibility of serious fading and Doppler shift. Especially in scenarios with high buildings, such as urban areas, the non-line of sight (NLoS) channel is easily encountered between the UAV and users, which may lead to signal attenuation and low transmission rate. 

\indent To solve the above problem, most of existing studies try to improve channel conditions by adjusting the horizontal position and height of UAVs. However, the lifetime of UAVs is shortened with frequent adjustments of their positions, due to the limitation of their battery capacities. Recent studies show that channel conditions between UAVs and ground users can be improved by deploying low-cost intelligent reflective surfaces (IRSs) on building surfaces. As a new technology of 6G, the IRS has wide applications in the field of wireless communications. It can be used to realize passive beamforming by phase shift adjustment and improve the achievable transmission rate of the UAV communication system. Therefore, the integration of IRSs and UAVs is promising for wireless channel enhancement. Most existing studies consider deploying a single IRS in the UAV communication system~\cite{9454446,10070838,10044705}. However, as the number of users increases, this may result in a large number of users unable to be served by the IRS. One promising approach is to deploy multiple IRSs and determine user association between users and IRSs, to improve system capacity and quality of service (QoS). Some researchers investigate communication enhancement with multi-IRS deployment, e.g.,~\cite{9893192,9870557,9804341}. However, in these studies, the service area of each IRS is assumed to be independent, e.g., each IRS only considers users within its own service area, and their covered areas are not overlapped~\cite{9870557,9804341}. Correspondingly, the user association remains independent, which means that the choices of IRSs do not affect others. In addition, all these studies consider UAVs hovering in fixed locations with static user association. 

\indent However, in the real-world scenario, the user association of a moving UAV is highly dynamic and the optimal user association needs to be selected in real-time based on the channel condition and the UAV position. In multi-IRS scenarios, most studies use one row of reflecting elements to serve one user according to the uniform linear array (ULA)~\cite{9893192}, which does not make full utilization of the communication resources of the system. Since IRSs have multiple rows and columns, uniform planar array (UPA) is now considered to obtain beamforming gain~\cite{9293155,9656117,9894720}. By introducing UPA, we can segment each IRS so that it can serve multiple users simultaneously. Although the authors in~\cite{9454446} consider the area segmentation of IRSs, it is an ideal static segmentation. Note that since the user association is dynamic, the segmentation is also variable with the trajectory of UAVs. In addition, much research on trajectory optimization for IRS-assisted UAV communications does not consider the energy consumption of UAVs~\cite{9866052,10.1109/TWC.2022.3212830,9749020}. While the authors in~\cite{9804220} optimize the flight and communication energy of UAVs, they do not consider the optimization of transmission rate.

\indent The flexibility of UAVs and early deployment of IRSs provide great flexibility to wireless communication networks. Despite the above research discussing IRS-assisted UAV communication systems, there are still some challenges that need to be addressed:

\begin{itemize}
	\item First, UAV communication systems are highly dynamic. However, existing studies usually consider single IRS deployment or simple IRS association scheduling, which cannot make full utilization of IRS resources. Therefore, it is necessary to design a dynamically segmentable multi-IRS user association scheduling scheme. 
	
     \item Second, the energy consumption of UAV flight and hovering has always been an issue that needs to be tackled. Existing research designs focus on how to maximize the system sum rate, ignoring the energy consumption of UAVs, which greatly affects system stability. Therefore, UAV trajectory optimization with energy efficiency as the optimization objective is worth studying.

	\item Finally, most of the current solutions employ traditional methods to solve the problems in IRS-assisted UAV communication systems, such as convex optimization and heuristic algorithms. They are inefficient in UAV communication systems with changing environments. Therefore, efficient learning algorithms are needed for user association in highly dynamic systems.

\end{itemize}

\indent To address the above challenges, we construct a non-orthogonal multiple access (NOMA)-based UAV communication model with multi-IRS deployment, and design an \underline{a}lternating optimization algorithm based on \underline{i}nverse \underline{s}oft-Q \underline{l}earning~\cite{garg2021iqlearn} and successiv\underline{e} convex approximation (SCA), named AISLE. The algorithm jointly optimizes user association, UAV trajectories, power allocation and successive interference cancellation (SIC) decoding orders to maximize system energy efficiency. We summarize the contributions of this paper as follows:

\begin{itemize}
	\item We construct a multiple IRS-assisted UAV communication system, and design an IRS dynamic segmentation strategy that allows a single IRS to serve multiple users simultaneously. To improve the spectrum utilization of the system, we leverage the NOMA communication technique and optimize the SIC decoding order.
	
     \item Considering the flight and communication energy consumption of the UAV, we formulate an optimization problem for maximizing the system energy efficiency. To solve this problem, we first decompose it into three subproblems based on the coupling of variables: IRS-user association, UAV trajectory design, and joint SIC decoding order scheduling and power allocation. An alternating optimization algorithm is proposed to solve these subproblems.

	\item For the first subproblem, we design an inverse soft-Q learning based algorithm to decide user association. For the second subproblem, we design a Dinkelbach and SCA-based algorithm to approximate the optimal UAV trajectory. Finally, we design a penalty-based SCA algorithm to optimize SIC decoding scheduling and power allocation. In addition, we theoretically prove the algorithm effectiveness from convergence and time complexity.


\end{itemize}

\indent The remainder of this paper is organized as follows: In Section II, we introduce the UAV communication system model with multi-IRS deployment, and formulate the system energy efficiency maximization problem. The AISLE algorithm is presented in Section III. Section IV presents the simulation results of the designed algorithm. Finally, we provide a brief conclusion in Section V. 

\indent \textit{Notation:} Vectors are denoted by bold-face, and $\left\| \cdot  \right\|$ represents the Euclidean norm of a vector. Symbol ${{\mathbb{C}}^{M\times N}}$ represents an $M\times N$ matrix. Symbols ${\mathbf{X}^\mathbf{T}}$ and ${\mathbf{X}^\mathbf{H}}$ denote the transpose and conjugate of $\mathbf{X}$, respectively, and $\mathbf{X}\otimes \mathbf{Y}$ denotes the Kronecker product of $\mathbf{X}$ and $\mathbf{Y}$. Operator $\left\lfloor \cdot \right\rfloor$ denotes the downward rounding, and $\mathbb{E}\left( \cdot  \right)$ represents the ematical expectation.

\begin{figure}
  \centering
  \includegraphics[width=0.80\linewidth]{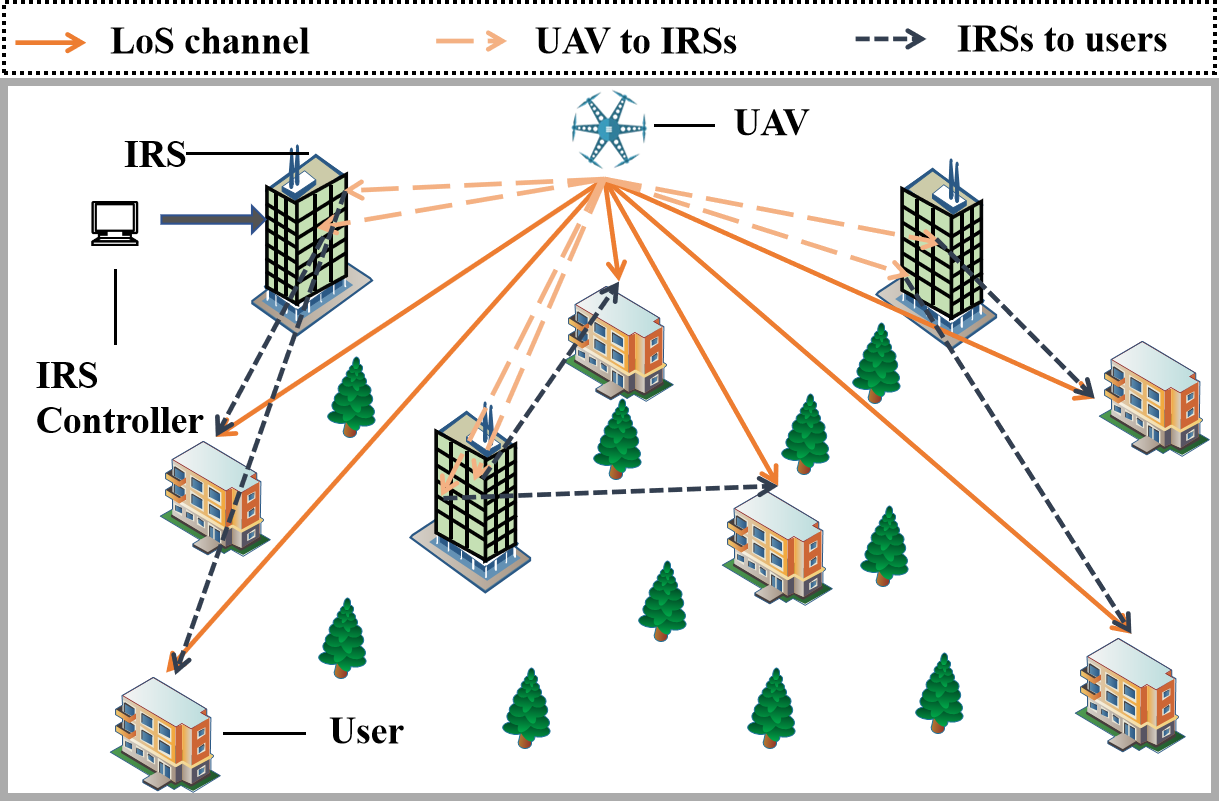}
  \caption{The system model of IRS-assisted UAV communications.}
  \label{fig1}
\vspace{-5mm}
\end{figure}
\section{System Model}

\indent Fig. 1 shows the considered multi-IRS assisted UAV communication system to support ground users' communication in traffic-congested areas. Specifically, a NOMA-based UAV communication system with multi-IRS deployment is considered to serve $N$ users, where set $\mathcal{N}=\left\{1,..., i,..., N \right\}$ denotes the index of users. Each IRS consists of ${{J}_{b}}\times {{J}_{I}}$ reflecting elements that are deployed on the surface of buildings to ensure that all users in the system can obtain their services. The index of IRSs is denoted by set $\mathcal {S}=\left\{ 1,2,...,s,...,S \right\}$. Symbol $\mathcal{T}$ denotes the total service time and is divided into $T$ equal time slots, i.e., $\mathcal{T}=T\tau$, where $\tau$ denotes time slot. In each time slot $\tau$, users communicate with the UAV by NOMA. Without loss of generality, three-dimensional (3D) Euclidean coordinates are used for modeling, and the UAV coordinate at time slot $t$ is denoted by $L[t]={{[x[t],y[t],z[t]]}^{\text{T}}}$. In addition, consider that the positions of both users and IRSs are fixed during the entire service time, and thus the coordinate of user $i$ is denoted by ${{L}_{i}}={{[{{x}_{i}},{{y}_{i}},{{z}_{i}}]}^{\text{T}}}$, and that of IRS $s$ is denoted by ${{L}_{s}}={{[{{x}_{s}},{{y}_{s}},{{z}_{s}}]}^{\text{T}}}$. The Euclidean distance between the UAV and user $i$ at time slot $t$ is denoted by $d_{i}^{\text{UG}}[t]$, and that between the UAV and IRS $s$ is represented by $d_{s}^{\text{UR}}[t]$, i.e., $d_{i}^{\text{UG}}={\left\| L[t]-{{L}_{i}} \right\|}$ and $d_{s}^{\text{UR}}=\left\| L[t]-{{L}_{s}} \right\|$. At time slot $t$, we assume that the UAV position is invariant, because the position change of the UAV is considerably smaller than distances $d_{i}^{\text{UG}}[t]$ and $d_{s}^{\text{UR}}[t]$~\cite{9293155,9804220}. In addition, considering the path loss and reflection loss in free space, we do not consider signals that are reflected twice and more~\cite{9804220}. To resemble a real congested communication system, we assume that each user in each time slot has a communication demand.

\begin{figure*}[b]
\hrulefill
	\begin{align*}\label{eq3}
\small
\textbf{h}_{s,i}^{\text{UR}}\left[ t \right] &= \left[ 1, e^{-j2\pi \frac{f_c}{c}D_I\sin \theta _{s}^{\text{UR}}\left[ t \right] \cos \varphi _{s}^{\text{UR}}\left[ t \right]}, \dots, e^{-j2\pi \frac{f_c}{c}\left( J_I-1 \right)D_I\sin \theta _{s}^{\text{UR}}\left[ t \right] \cos \varphi _{s}^{\text{UR}}\left[ t \right]} \right]^{\text{T}} \tag{3}\\ 
&\quad \otimes \left[ 1, e^{-j2\pi \frac{f_c}{c}D_b\sin \theta _{s}^{\text{UR}}\left[ t \right] \sin \varphi _{s}^{\text{UR}}\left[ t \right]}, \dots, e^{-j2\pi \frac{f_c}{c}\left( J_{s,i}\left[ t \right] -1 \right) D_b\sin \theta _{s}^{\text{UR}}\left[ t \right] \sin \varphi _{s}^{\text{UR}}\left[ t \right]} \right]^{\text{T}}.
	\end{align*}
\end{figure*}

\subsection{Channel Model}

\indent We consider a multi-IRS assisted UAV communication system, and the channel between the UAV and user $i$ consists of a direct channel and a reflected channel through IRSs. The direct channel is modeled according to the Rician fading channel model in the city~\cite{9454446} and can be expressed by: 
\begin{align*}\label{eq1}
\small
H_{i}^{\text{UG}}[t] &=\sqrt{\frac{\beta }{{{(d_{i}^{\text{UG}}[t])}^{\xi _{i}^{\text{UG}}}}}}\left( \sqrt{\frac{\gamma _{i}^{\text{UG}}}{\gamma _{i}^{\text{UG}}+1}}\right.\tag{1}\\ 
&\left. +\sqrt{\frac{1}{\gamma _{i}^{\text{UG}}+1}}\tilde{h}_{i}^{\text{UG}}\left[ t \right] \right),
\end{align*}
\noindent where $\beta$ denotes the path loss at one meter. Variable $\xi _{i}^{\text{UG}}$ denotes the path loss exponent corresponding to the channel between the UAV and user $i$. Symbol $\gamma _{i}^{\text{UG}}$ denotes the corresponding Rician factor, and $\tilde{h}_{i}^{\text{UG}}\left[ t \right]$ is the random NLoS scattering component obeying a cyclic symmetric Gaussian distribution. 

\indent The reflected channel comprises the channel from the UAV to IRS $s$ and the channel from IRS $s$ to user $i$, and both of them are considered as line of sight (LoS) channels. Thus, we model the channel between the UAV and IRS $s$ by employing the generic LoS channel model~\cite{9454446}, and the channel vector at time slot $t$ is denoted by:
\begin{align*}\label{eq2}
\small
\mathbf{H}_{s,i}^{\text{UR}}\left[ t \right]=\sqrt{\frac{\beta }{{{(d_{s}^{\text{UR}}[t])}^{2}}}}\mathbf{h}_{s,i}^{\text{UR}}[t].\tag{2}
\end{align*}
\indent Let $f_{c}$ and $c$ denote the carrier frequency and light speed, respectively, and LoS component $\mathbf{h}_{s,i}^{\text{UR}}[t]$ is denoted by equation (\ref{eq3}) at the bottom of this page.

\indent We design an efficient segmentation strategy to enable IRS $s$ to serve multiple users simultaneously. Inspired by UPA and~\cite{9454446,9976948,9714216}, we adopt rectangular arrays to divide IRSs. Specifically, since the rectangular array is used in the current studies~\cite{9784946,9810528,10109153}, the rectangular segmentation is more effective in utilizing the area of IRSs compared to the circular and elliptical segmentation. Let binary variable ${{\alpha }_{s,i}}[t]\in \left\{ 0,1 \right\}$ denotes whether user $i$ is served by IRS $s$ at time slot $t$, where ${{\alpha }_{s,i}}[t] = 1$ indicates user $i$ is served by IRS $s$ and vice versa. To facilitate the design, IRSs are divided evenly among the served users since user association variable ${{\alpha }_{s,i}}$ varies at time slot $t$. Therefore, we consider an equal-sized rectangular array to be assigned to the served users after the rows of IRSs are uniformly partitioned. In this paper and related work~\cite{9913496,9681874}, the coordinates of the reflecting elements are replaced by the coordinates of the reflective surfaces. Thus, the segmentation of rows and columns of IRSs is equivalent.

\indent Note that the limit case is transformed from UPA to ULA when the number of users served by IRSs matches the number of rows. Variable ${{J}_{s,i}}[t]$ denotes the row number of reflecting elements assigned to user $i$ by IRS $s$ at time slot $t$. By considering that the reflecting element is the smallest unit of IRS, expression ${{J}_{s,i}}[t]=\left\lfloor {{J}_{b}}/\sum\limits_{i=1}^{N}{{{\alpha }_{s,i}}\left[ t \right]} \right\rfloor$ is founded. Variables ${{D}_{b}}$ and ${{D}_{I}}$ denote the row spacing and column spacing of the corresponding reflecting elements, respectively. Variables $\theta _{s}^{\text{UR}}[t]$ and $\varphi _{s}^{\text{UR}}[t]$ denote the vertical and horizontal angles of arrival of IRS $s$, then $\sin \theta _{s}^{\text{UR}}[t]=\frac{z[t]-{{z}_{s}}}{d_{s}^{\text{UR}}[t]}$, $\sin \varphi _{s}^{\text{UR}}[t]=\frac{{{x}_{s}}-x[t]}{\sqrt{{{\left( {{x}_{s}}-x[t] \right)}^{2}}+{{\left( {{y}_{s}}-y[t] \right)}^{2}}}}$, $\cos \varphi _{s}^{\text{UR}}[t]=\frac{y[t]-{{y}_{s}}}{\sqrt{{{\left( {{x}_{s}}-x[t] \right)}^{2}}+{{\left( {{y}_{s}}-y[t] \right)}^{2}}}}$.

\indent Using the Rician fading model, the channel vector between user $i$ and IRS $s$ at time slot $t$ is denoted by:
\begin{align*}\label{eq4}
\small
\mathbf{H}_{s,i}^{\text{RG}}[t]&=\sqrt{\frac{\beta }{{{(d_{s,i}^{\text{RG}})}^{\xi _{s,i}^{\text{RG}}}}}}\left( \sqrt{\frac{\gamma _{s,i}^{\text{RG}}}{\gamma _{s,i}^{\text{RG}}+1}}\mathbf{h}_{s,i}^{\text{RG}}[t] \right.\tag{4}\\ 
&\left.+\sqrt{\frac{1}{\gamma _{s,i}^{\text{RG}}+1}}\tilde{\mathbf{h}}_{s,i}^{\text{RG}}\left[ t \right] \right),
\end{align*}
\noindent where $d_{s,i}^{\text{RG}}=\left\| {{L}_{s}}-{{L}_{i}} \right\|$ and $\xi _{s,i}^{\text{RG}}$ denotes the distance and path loss exponent of the channel between IRS $s$ and user $i$, respectively, and $\gamma _{s,i}^{\text{RG}}$ denotes the corresponding Rician factor. Variable $\tilde{h}_{s,i}^{\text{RG}}\left[ t \right]$ denotes the NLoS component, which is subject to a cyclic symmetric Gaussian distribution. The vertical and horizontal angles of arrival from IRS $s$ to user $i$ are denoted by $\theta _{s,i}^{\text{RG}}\left[ t \right]$ and $\varphi _{s,i}^{\text{RG}}\left[ t \right]$, respectively. Then, $\sin \theta _{s,i}^{\text{RG}}={z}_{s}/d_{s,i}^{\text{RG}}$, $\sin \varphi _{s,i}^{\text{RG}}[t]=\frac{{{x}_{i}}-{{x}_{s}}}{\sqrt{{{\left( {{x}_{s}}-{{x}_{i}} \right)}^{2}}+{{\left( {{y}_{s}}-{{y}_{i}} \right)}^{2}}}}$, and $\cos \varphi _{s,i}^{\text{RG}}[t]=\frac{{y}_{i}-{y}_{s}}{\sqrt{{{\left( {{x}_{s}}-{{x}_{i}} \right)}^{2}}+{{\left( {{y}_{s}}-{{y}_{i}} \right)}^{2}}}}$. LoS component $\mathbf{h}_{s,i}^{\text{RG}}\left[ t \right]$ is represented by equation (\ref{eq5}), located at the bottom of this page.
\begin{figure*}[b]
\hrulefill
	\begin{align*}\label{eq5}
\mathbf{h}_{s,i}^{\text{RG}}\left[ t \right] &=\left[ 1,e^{-j2\pi \frac{f_c}{c}D_I\sin \theta _{s,i}^{\text{RG}}\cos \varphi _{s,i}^{\text{RG}}},...,e^{-j2\pi \frac{f_c}{c}\left( J_I-1 \right)D_I\sin \theta _{s,i}^{\text{RG}}\cos \varphi _{s,i}^{\text{RG}}}\right] ^{\text{T}}\tag{5}\\ 
&\quad \otimes \left[ 1,e^{-j2\pi \frac{f_c}{c}D_b\sin \theta _{s,i}^{\text{RG}}\sin \varphi _{s,i}^{\text{RG}}},...,e^{-j2\pi \frac{f_c}{c}\left( J_{s,i}\left[ t \right] -1 \right) D_b\sin \theta _{s,i}^{\text{RG}}\sin \varphi _{s,i}^{\text{RG}}} \right] ^{\text{T}}.
	\end{align*}
\end{figure*}

\indent In particular, in order to fully utilize reflecting elements of IRSs, UPA is considered, and the corresponding reflection phase shift matrix is denoted by:
\begin{align*}\label{eq6}
\small
{{\Phi }_{s,i}}[t]=\text{diag}(\phi [t])\in {{\mathbb{C}}^{{{J}_{I}}{{J}_{s,i}}[t]\times {{J}_{I}}{{J}_{s,i}}[t]}}, \tag{6}
\end{align*}
\noindent where $\phi [t]={{\left[ {{e}^{j{{\phi }_{1,1}}[t]}},...,{{e}^{j{{\phi }_{{{J}_{I}},{{J}_{s.i}}[t]}}[t]}} \right]}^{\text{T}}}\in {{\mathbb{C}}^{{{J}_{I}}{{J}_{s,i}}[t]\times 1}}$. 

\indent The amplitude loss caused by the IRS is denoted by $A\in \left( 0,1 \right)$, and the reflected channel from the UAV to user $i$ via IRS $s$ is denoted by:
\begin{align*}\label{eq7}
\small
H_{s,i}^{\text{URG}}\left[ t \right]=A{{(\mathbf{H}_{s,i}^{\text{RG}}[t])}^{\mathbf{H}}}{{\Phi }_{s,i}}[t]\mathbf{H}_{s,i}^{\text{UR}}[t].\tag{7}
\end{align*}
\noindent Then, combining the directed channel and the reflected channel yields a composite channel from the UAV to user $i$, represented as:
\begin{align*}\label{eq8}
\tiny
O_{i}^{\text{UG}}[t]&=H_{i}^{\text{UG}}[t]+\sum\limits_{s=1}^{S}{{{\alpha }_{s,i}}[t]H_{s,i}^{\text{URG}}\left[ t \right]}\tag{8}\\
&=Q_{i}^{\text{UG}}[t]+\tilde{Q}_{i}^{\text{UG}}[t],
\end{align*}
\noindent where $Q_{i}^{\text{UG}}[t]$ and $\tilde{Q}_{i}^{\text{UG}}[t]$ denote the LoS component and the NLoS component, respectively. $Q_{i}^{\text{UG}}[t]$ is represented by equation (\ref{eq9}). Note that the LoS component dominates in UAV communication scenarios and changes slowly compared to the NLoS component~\cite{9293155}. Therefore, in the following, we mainly focus on the design of the LoS component. 

\begin{figure*}[b]
\hrulefill
	\begin{align*}\label{eq9}
Q_{i}^{\text{UG}}[t] =\sqrt{\frac{\beta }{{{(d_{i}^{\text{UG}}[t])}^{\xi _{i}^{\text{UG}}}}}}\sqrt{\frac{\gamma _{i}^{\text{UG}}}{\gamma _{i}^{\text{UG}}+1}} +\sum\limits_{s=1}^{S}{\frac{{{\alpha }_{s,i}}\left[ t \right]A\beta }{d_{s}^{\text{UR}}[t]{{(d_{s,i}^{\text{RG}})}^{\xi _{s,i}^{\text{RG}}/2}}}\sqrt{\frac{\gamma _{s,i}^{\text{RG}}}{\gamma _{s,i}^{\text{RG}}+1}}{{\left( \mathbf{h}_{s,i}^{\text{RG}}[t] \right)}^{\textbf{H}}}{{\Phi }_{s,i}}[t]\mathbf{h}_{s,i}^{\text{UR}}[t]}.\tag{9}
	\end{align*}
\end{figure*}

\indent Then, a phase shift control strategy is utilized and the strategy yields the optimal passive beamforming gain in IRS-assisted UAV communication scenarios~\cite{9866052}, and it's denoted by equation (\ref{eq10}) at the bottom of the next page.
\begin{figure*}[b]
\hrulefill
	\begin{align*}\label{eq10}
{{\phi }_{{{J}_{I}},{{J}_{s,i}}[t]}}[t]&=-2\pi \frac{{{f}_{c}}}{c}({{D}_{I}}({{J}_{I}}-1)\sin \theta _{s,i}^{\text{RG}}\cos \varphi _{s,i}^{\text{RG}}-{{D}_{b}}({{J}_{b}}-1)\sin \theta _{s,i}^{\text{RG}}\sin \varphi _{s,i}^{\text{RG}} \tag{10}\\ 
 & +{{D}_{I}}({{J}_{I}}-1)\sin \theta _{s}^{\text{UR}}[t]\cos \varphi _{s}^{\text{UR}}[t]+{{D}_{b}}({{J}_{b}}-1)\sin \theta _{s}^{\text{UR}}[t]\sin \varphi _{s}^{\text{UR}}[t]),
	\end{align*}
\end{figure*}
\begin{theorem}
	The phase control strategy denoted by equation (\ref{eq10}) is optimal at time slot $t$.
\end{theorem}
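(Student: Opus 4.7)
The plan is to show that the composite channel magnitude $|O_i^{\text{UG}}[t]|$ in equation (\ref{eq8}) attains its upper bound under the phase choice (\ref{eq10}). Since $\Phi_{s,i}[t]$ enters $Q_i^{\text{UG}}[t]$ only through the $s$-th summand in equation (\ref{eq9}), and the direct-channel term $\sqrt{\beta/(d_i^{\text{UG}}[t])^{\xi_i^{\text{UG}}}}\sqrt{\gamma_i^{\text{UG}}/(\gamma_i^{\text{UG}}+1)}$ does not depend on any $\Phi_{s,i}[t]$, the problem decouples across IRSs: it suffices to maximize $\bigl|(\mathbf{h}_{s,i}^{\text{RG}}[t])^{\mathbf{H}}\Phi_{s,i}[t]\mathbf{h}_{s,i}^{\text{UR}}[t]\bigr|$ for each IRS $s$ with $\alpha_{s,i}[t]=1$, then choose a common global phase to align all reflected contributions with the direct LoS term.

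First, I would expand the bilinear form using $\Phi_{s,i}[t]=\mathrm{diag}(\phi[t])$ and the Kronecker structures in equations (\ref{eq3}) and (\ref{eq5}), yielding
\begin{align*}
(\mathbf{h}_{s,i}^{\text{RG}}[t])^{\mathbf{H}}\Phi_{s,i}[t]\mathbf{h}_{s,i}^{\text{UR}}[t]=\sum_{m=1}^{J_I}\sum_{n=1}^{J_{s,i}[t]} e^{j\phi_{m,n}[t]}\,e^{j\psi^{\text{RG}}_{m,n}}\,e^{-j\psi^{\text{UR}}_{m,n}[t]},
\end{align*}
where $\psi^{\text{RG}}_{m,n}$ and $\psi^{\text{UR}}_{m,n}[t]$ are the linear-in-$(m{-}1,n{-}1)$ steering phases read off from (\ref{eq3}) and (\ref{eq5}). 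Since every summand has unit modulus, the triangle inequality gives the bound $J_I J_{s,i}[t]$, attained if and only if every summand shares a common argument. Setting the argument of the $(m,n)$-th summand to a constant $c$ yields the unique (up to $c$) phase rule $\phi_{m,n}[t]=\psi^{\text{UR}}_{m,n}[t]-\psi^{\text{RG}}_{m,n}+c$. Substituting the explicit steering phases and specializing to $(m,n)=(J_I,J_{s,i}[t])$ reproduces equation (\ref{eq10}); the remaining freedom in $c$ is spent on aligning the reflected-term phase with that of the LoS direct component, so that under the dominant-LoS assumption the combined magnitude $|Q_i^{\text{UG}}[t]|$ is maximized.

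Two pieces of care are needed. The first and main obstacle is the Kronecker-product index bookkeeping: equations (\ref{eq3}) and (\ref{eq5}) use different spacings $D_I,D_b$ and the row index is tied to the time-varying partition size $J_{s,i}[t]$, so I must verify that the sign conventions used when conjugating $\mathbf{h}_{s,i}^{\text{RG}}[t]$ match those appearing in (\ref{eq10}); in particular the apparent minus sign on the $D_b(J_b-1)\sin\theta_{s,i}^{\text{RG}}\sin\varphi_{s,i}^{\text{RG}}$ term should come out of that conjugation. Second, since each IRS only controls its own $\Phi_{s,i}[t]$, I need to argue that after maximizing each $|(\mathbf{h}^{\text{RG}})^{\mathbf{H}}\Phi\mathbf{h}^{\text{UR}}|$ one may separately choose the per-IRS global constants $c_s$ so that the $S$ reflected contributions in (\ref{eq9}) add coherently with one another and with the direct LoS term; this is immediate because the $c_s$ are independent free parameters. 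Beyond this indexing check, the proof is a direct application of the triangle inequality and its equality condition.
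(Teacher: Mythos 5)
Your proposal is correct and follows essentially the same route as the paper: the paper likewise substitutes equations (\ref{eq3}) and (\ref{eq5}) into (\ref{eq9}) and argues that maximizing ${\left| Q_{i}^{\text{UG}}[t] \right|}^{2}$ amounts to phase-aligning the reflected LoS contribution of each serving IRS with the direct LoS component, which yields (\ref{eq10}). You simply make explicit what the paper leaves implicit — the per-IRS decoupling, the triangle-inequality bound $J_I J_{s,i}[t]$ with its equality condition, and the use of the free global phase $c_s$ to align with the real, positive direct term — so the argument matches the paper's proof in substance.
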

\begin{proof}
 \textit{The optimal phase control strategy is to obtain the maximum passive beamforming gain, i.e., maximize ${{\left| Q_{i}^{\text{UG}}[t] \right|}^{2}}$, by controlling the phase shift of IRS reflecting elements. Substituting equations (\ref{eq3}) and (\ref{eq5}) into equation (\ref{eq9}) yields:}
\begin{align*}\label{eq11}
\small
   Q_{i}^{\text{UG}}[t]=&{{\mu }_{i}}[t]\sqrt{\frac{\beta }{{{(d_{i}^{\text{UG}}[t])}^{\xi _{i}^{\text{UG}}}}}}\sqrt{\frac{\gamma _{i}^{\text{UG}}}{\gamma _{i}^{\text{UG}}+1}}\tag{11} \\ 
 & +\sum\limits_{s=1}^{S}{\frac{{{\alpha }_{s,i}}\left[ t \right]A\beta }{d_{s}^{\text{UR}}[t]{{(d_{s,i}^{\text{RG}})}^{\frac{\xi _{s,i}^{\text{RG}}}{2}}}}\sqrt{\frac{\gamma _{s,i}^{\text{RG}}}{\gamma _{s,i}^{\text{RG}}+1}}} \\ 
 & \times [1,{{e}^{j2\pi {{f}_{c}}{{D}_{I}}\sin \theta _{s,i}^{\text{RG}}\cos \varphi _{s,i}^{\text{RG}}/c}}\text{,}...\text{,} \\ 
 & {{e}^{j2\pi {{f}_{c}}({{J}_{I}}-1){{D}_{I}}\sin \theta _{s,i}^{\text{RG}}\cos \varphi _{s,i}^{\text{RG}}/c}}{{]}^{\text{T}}} \\ 
 & \otimes [1,{{e}^{j2\pi {{f}_{c}}{{D}_{b}}\sin \theta _{s,i}^{RG}\sin \varphi _{s,i}^{RG}/c}}\text{,}...\text{,} \\ 
 & {{e}^{j2\pi {{f}_{c}}({{J}_{s,i}}[t]-1){{D}_{b}}\sin \theta _{s,i}^{RG}\sin \varphi _{s,i}^{RG}/c}}{{]}^{\text{T}}}{{\Phi }_{s,i}}[t] \\ 
 & \times [1,{{e}^{-j2\pi {{f}_{c}}{{D}_{I}}\sin \theta _{s}^{UR}[t]\cos \varphi _{s}^{UR}[t]/c}}\text{,}...\text{,} \\ 
 & {{e}^{-j2\pi {{f}_{c}}(J{}_{I}-1){{D}_{I}}\sin \theta _{s}^{UR}[t]\cos \varphi _{s}^{UR}[t]/c}}{{]}^{\text{T}}} \\ 
 & \otimes [1,{{e}^{-j2\pi {{f}_{c}}{{D}_{b}}\sin \theta _{s}^{UR}[t]\sin \varphi _{s}^{UR}[t]/c}}\text{,}...\text{,} \\ 
 & {{e}^{-j2\pi {{f}_{c}}({{J}_{s,i}}[t]-1){{D}_{b}}\sin \theta _{s}^{UR}[t]\sin \varphi _{s}^{UR}[t]/c}}{{]}^{\text{T}}}.
\end{align*}
\indent\textit{From equation (\ref{eq11}), we know that maximizing ${{\left| Q_{i}^{\text{UG}}[t] \right|}^{2}}$ is equivalent to aligning the LoS component of the reflected link with the LoS component of the direct link. Therefore, we obtain the optimal phase control strategy denoted by equation (\ref{eq10}). Theorem 1 is proved.}
\end{proof}
\indent Further, the composite channel power gain between the UAV and user $i$ at time slot $t$ can be denoted by equation (\ref{eq12}) at the bottom of the next page. It can be observed that equation (\ref{eq12}) consists of three components, where the first is the UAV-user direct component, the second is the UAV-IRS-user relevant component, and the third is the composite component.
\begin{figure*}[b]
\hrulefill
	\begin{align*}\label{eq12}
\tiny
\left| Q_{i}^{\text{UG}}\left[ t \right] \right|^2 &=\frac{\beta}{\left( d_{i}^{\text{UG}}\left[ t \right] \right) ^{\xi _{i}^{\text{UG}}}}\frac{\gamma _{i}^{\text{UG}}}{\gamma _{i}^{\text{UG}}+1}+\sum\limits_{\text{s}=1}^{S}{\frac{\alpha _{s,i}\left[ t \right] A^2\beta ^2J_b^2J_{s,i}\left[ t \right] ^2}{d_{s}^{\text{UR}}\left[ t \right] ^2\left( d_{s,i}^{\text{RG}} \right) ^{\xi _{s,i}^{\text{RG}}}}\frac{\gamma _{s,i}^{\text{RG}}}{\gamma _{s,i}^{\text{RG}}+1}}\tag{12}\\
&+\sum\limits_{\text{s}=1}^{S}{\frac{2\alpha _{s,i}\left[ t \right] A\beta ^{3/2}J_bJ_{s,i}\left[ t \right]}{d_{s}^{\text{UR}}\left[ t \right] \left( d_{s,i}^{\text{RG}} \right) ^{\xi _{s,i}^{\text{RG}}/2}\left( d_{i}^{\text{UG}}\left[ t \right] \right) ^{\xi _{i}^{\text{UG}}/2}}\sqrt{\frac{\gamma _{i}^{\text{UG}}}{\gamma _{i}^{\text{UG}}+1}}\sqrt{\frac{\gamma _{s,i}^{\text{RG}}}{\gamma _{s,i}^{\text{RG}}+1}}}.
	\end{align*}
\end{figure*}
\subsection{NOMA Transmission}
\indent We consider utilizing NOMA to transmit information from all users on the same frequency. Without loss of generality, we use ${{w}_{i,j}}[t]\in \{0,1\},\forall i\ne j,i\in N,j\in N$ to denote the relative SIC decoding order of users $i$ and $j$ at time slot $t$, with ${{w}_{i,j}}[t]=1$ denoting that user $i$ decodes before $j$ and vice versa. Thus, the maximum transmission rate of user $i$ is denoted by:
\begin{align*}\label{eq13}
{{R}_{i}}[t]={{\log }_{2}}\left( 1+\frac{{{p}_{i}}[t]|Q_{i}^{\text{UG}}[t]{{|}^{2}}}{|Q_{i}^{\text{UG}}[t]{{|}^{2}}\sum\limits_{j,j\ne i}^{N}{{{w}_{j,i}}[t]}{{p}_{j}}[t]+{{\delta }^{2}}} \right),\tag{13}
\end{align*}
\noindent where ${{\delta }^{2}}$ denotes the power of additive Gaussian white noise, and the transmit power allocated for user $i$ at time slot $t$ is denoted by ${{p}_{i}}[t]$. In addition, in order to guarantee the fairness among transmission rates of users, when user $i$ decodes before user $j$, it is necessary to satisfy:
\begin{align*}\label{eq14}
{{p}_{j}[t]}\ge {{w}_{i,j}}[t]{{p}_{i}[t]},\forall i\ne j\in N.\tag{14}
\end{align*}

\subsection{UAV Energy Consumption}

\indent Considering the flight and communication energy consumption of the UAV, we employ the flight energy consumption model of a rotary-wing UAV~\cite{9043712}, which is denoted by:
\begin{align*}\label{eq15}
{{P}_{Y}}[t]=\mathcal{P}(1+\frac{3\parallel \textbf{v}[t]{{\parallel }^{2}}}{{{\omega }^{2}}{{r}^{2}}})+\frac{\mathcal{W}\mathcal{V}}{\parallel \textbf{v}[t]\parallel }+\frac{1}{2}\mathbb{G}\rho \varsigma \mathcal{M}\parallel \textbf{v}[t]{{\parallel }^{3}},\tag{15}
\end{align*}
\noindent where $\textbf{v}\text{ }\!\![t]\!\!\text{ =}{{\left[ {{v}_{x}}[t],{{v}_{y}}\left[ t \right] \right]}^{\text{T}}}$ is the velocity vector of the UAV. Symbol $\omega$ denotes the angular velocity of blades, and $\mathcal{V}$ denotes the average rotor induction velocity of the UAV in flight. Symbol $r$ denotes the radius of the rotor, and $\varsigma$ denotes the rotor stability in cubic meters, while $\mathcal{M}$ denotes the area of the rotor disk. Environmental factors include symbols $\rho$ and $\mathbb{G}$, which denote air density and the airframe drag ratio, respectively. Symbol $\mathcal{P}$ is the blade profile power in the flying state, and $\mathcal{W}$ denotes induced power in the hovering state. Symbol $\eta$ denotes the correction factor for the realistic power, and then the total communication power consumption can be expressed by ${{P}_{M}[t]}=\eta \sum\limits_{i=1}^{N}{{{p}_{i}[t]}}$. Note that, considering the performance evaluation under different transmit powers in Section IV, as well as the UAV energy consumption model in related works~\cite{9043712,9804220}, this paper introduces the communication energy consumption of UAV. The total power consumption of a UAV can be defined by ${{P}_{\text{sum}}}[t]={{P}_{M}[t]}+{{P}_{Y}}[t]$.

\subsection{Problem Formulation}

\indent Based on the above definitions of transmission rate and UAV power consumption, we define the ratio of the total transmission rate to the total power consumption of the UAV at time slot $t$ as the system energy efficiency, i.e., $\sum\limits_{i=1}^{N}{{R}_{i}[t]}/{P}_{\text{sum}}[t]$. We aim to obtain the maximum system energy efficiency by jointly optimizing user-IRS association, UAV trajectories, power allocation and SIC decoding orders. By defining $\alpha =\left\{ {{\alpha }_{s,i}}[t],\forall i\in N,s\in S,t\in T \right\}$, $w=\left\{{{w}_{i,j}}[t],\forall i\ne j\in N,t\in T \right\}$, $L=\left\{ L\text{ }\!\![\!\!\text{ t }\!\!]\!\!\text{ },\forall t\in T \right\}$ and $p=\left\{ {{p}_{i}}[t],\forall i\in N,t\in T \right\}$, the system energy efficiency maximization problem is formulated as follows:
\begin{align*}\label{eq16}
P0:\text{ }&\underset{\alpha ,w,L,p}{\mathop{\text{maximize}}}\,\text{ }\sum\limits_{t=1}^{T}{\frac{\sum\limits_{i=1}^{N}{{{R}_{i}}[}t]}{{{P}_{\text{sum}}}[t]}},\tag{16}
\\ \text{s.t.}\text{ }& C1:\text{ }{{\alpha }_{s,i}}\left[ t \right]\in \left\{ 0,1 \right\},\forall s\in S,i\in N,t\in T, 
\\& C2:\text{ }\sum\limits_{s=1}^{S}{{{\alpha }_{s,i}}\left[ t \right]}=1,\forall i\in N,t\in T, 
\\& C3:\text{ }\sum\limits_{i=1}^{N}{{{\alpha }_{s,i}}\left[ t \right]}\le {{J}_{b}},\forall s\in S,t\in T, 
\\& C4:\text{ }\left\| L[t+1]-L[t] \right\|\le \tau {{V}_{\max }},\forall t\in T, 
\\& C5:\text{ }L\text{ }\!\![\!\!\text{ 0 }\!\!]\!\!\text{  = }{{L}_{0}}, 
\\& C6:\text{ }{{Z}_{\min }}\le z[t]\le {{Z}_{\max }},\forall t\in T, 
\\& C7:\text{ }{{w}_{i,j}}[t]\in \left\{ 0,1 \right\},\text{ }\forall i\ne j\in N, t\in T, 
\\& C8:\text{ }{{w}_{i,j}}[t]+{{w}_{j,i}}[t]=1,\forall i\ne j\in N, 
\\& C9:\text{ }{{p}_{j}}\ge {{w}_{i,j}}[t]{{p}_{i}},\forall t\in T, i\ne j\in N, 
\\& C10:\text{ }\sum\limits_{i=1}^{N}{p_i}[t]=p_{\max},\forall t\in T, 
\\& C11:\text{ }{{p}_{i}}[t]\ge 0,\forall i\in N, t\in T. 
\end{align*}

\noindent Note that $C1$ defines the range of IRS association variable ${{\alpha }_{s, i}}$, and $C2$ denotes that each user at time slot $t$ should be assigned with an IRS, and $C3$ denotes the number limit of users to be served by one IRS at time slot $t$. $C4$ denotes the limit of the movement distance of the UAV at time slot $t$. $C5-C6$ denote the starting position and flight altitude constraints of the UAV, respectively. $C7$ defines the range of SIC decoding scheduling variable, and $C8$ indicates that when the signal of user $i$ is interfered with by that of user $j$, the latter one is not interfered with by the former one. $C9$ represents that the transmit power allocated to user $i$ is not more than that of user $j$ when user $i$ decodes before user $j$. $C10$ and $C11$ denote the transmit power constraints at time slot $t$. Note that the design of Problem $P0$ assumes perfect channel state information (CSI)~\cite{9681874,9849460,9769985}.
\begin{theorem}
	The formulated optimization Problem $P0$ is NP-hard.
\end{theorem}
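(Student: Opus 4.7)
The plan is to establish NP-hardness by polynomial-time reduction from a known NP-hard problem to a restricted instance of $P0$. The natural targets are the generalized assignment problem (GAP) or the 0-1 multi-dimensional knapsack, both of which share the cardinality-constrained binary assignment structure visible in constraints $C1$--$C3$. I would first fix the continuous blocks of decision variables: freeze the UAV trajectory $L$ at any feasible sequence (e.g., hover at $L_{0}$ subject to $C4$--$C6$) and fix a uniform power allocation $p_{i}[t]=p_{\max}/N$ satisfying $C9$--$C11$. This reduction from $P0$ to its binary core is polynomial in the input size, so NP-hardness of the restricted problem transfers to $P0$.

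Next I would argue that the remaining combinatorial subproblem, which optimizes only $\alpha$ and $w$, already contains the GAP structure. Each user in constraint $C2$ must be assigned to exactly one IRS, and each IRS has capacity at most $J_b$ by $C3$; the per-assignment contribution to the numerator of the objective is the achievable rate $R_{i}[t]$, which depends through $|Q_{i}^{\text{UG}}[t]|^{2}$ on the aggregate assignment due to the segmentation factor $J_{s,i}[t]=\lfloor J_b/\sum_{i}\alpha_{s,i}[t]\rfloor$. By choosing the channel parameters (distances, Rician factors, and path-loss exponents) to enforce a one-to-one correspondence between assignment configurations and GAP profit/weight values, I can embed an arbitrary GAP instance as a specific geometric configuration of IRSs and users. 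The SIC ordering variables $w$ can be frozen to an arbitrary total order consistent with $C8$, so they do not interfere with the reduction.

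The technical obstacle is twofold. First, the objective in $P0$ is a \emph{sum-of-ratios} fractional form rather than a linear functional, so I must verify that the denominator $P_{\text{sum}}[t]$ is constant (or bounded within a known range) under the chosen restrictions, which holds once the trajectory and power are fixed since $P_{Y}[t]$ and $P_{M}[t]$ become instance-independent constants. Second, the floor operator inside $J_{s,i}[t]$ couples all users assigned to the same IRS, making the contribution to $R_{i}[t]$ \emph{nonseparable} across users sharing an IRS; this is the hardest part of the embedding, because GAP is classically linear-separable in the assignment. I would address this by scaling the channel geometry so that each IRS in the constructed instance hosts at most one user in any optimal GAP solution, rendering $J_{s,i}[t]=J_b$ uniformly and thereby linearizing the contribution.

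Finally, I would conclude by invoking the polynomial-time verifiability of a candidate solution (evaluating the composite channel, rate and power expressions is polynomial in $N,S,T$) to place the decision version of $P0$ in NP, so that combined with the above reduction the problem is NP-complete in its decision form and NP-hard as stated.
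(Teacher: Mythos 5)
There is a genuine gap, and it sits exactly at the step you flag as "the hardest part of the embedding." Your fix --- scaling the geometry so that each IRS hosts at most one user in any optimal solution, forcing $J_{s,i}[t]=J_b$ and making the per-user rate separable --- destroys the structure you need for hardness. Once the trajectory, power split and SIC order are frozen and the objective is a separable sum of user--IRS profits, the residual problem over $\alpha$ under $C1$--$C3$ is: assign each user to exactly one IRS with at most $J_b$ users per IRS, maximizing a sum of pairwise profits. Because $C3$ is a pure cardinality bound (every user consumes exactly one unit of IRS capacity; there are no heterogeneous item sizes), this is a bipartite $b$-matching / transportation problem solvable in polynomial time by min-cost flow, so the restricted instances you construct cannot encode GAP or knapsack. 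Indeed, GAP's NP-hardness hinges on bin-dependent item sizes, which $C1$--$C3$ cannot express; and even without the one-user-per-IRS restriction, the coupling produced by $J_{s,i}[t]=\lfloor J_b/\sum_{i}\alpha_{s,i}[t]\rfloor$ is a symmetric congestion effect, not a freely specifiable profit/size table, so the claimed "one-to-one correspondence between assignment configurations and GAP profit/weight values" is unsubstantiated. Any valid reduction has to draw its hardness precisely from the couplings you freeze or linearize away --- the floor-based segmentation, the NOMA interference controlled by $w$ together with the shared power budget in $C9$--$C10$, or the fractional objective --- and your proposal never does so.

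For comparison, the paper's own proof follows the same high-level route you chose (fix the UAV position, argue that the remaining joint user-association, decoding-order and power subproblem is a classical Knapsack-type problem, hence $P0$ is NP-hard), but it does so informally, asserting membership in the Knapsack family without constructing a reduction. Your attempt to make that sketch rigorous is the right instinct, and your closing remark about polynomial verifiability of the decision version is fine but orthogonal to the difficulty; the reduction itself, as currently designed, proves hardness of nothing, because its target special case is polynomial-time solvable.
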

\begin{proof}
\textit{The objective function of Problem $P0$ is nonconvex with respect to user association variable ${{\alpha }_{s,i}}[t]$, decoding scheduling variable ${{w}_{i,j}}[t]$ and UAV position $L[t]$, and these variables are highly coupled with each other. Since variables ${{w}_{i,j}}[t]$ and ${{\alpha }_{s,i}}[t]$ are binary, and constraints $C1$ and $C7$ are nonconvex, Problem $P0$ is a mixed-integer fractional nonconvex problem. In fact, even if the position of the UAV is fixed, the problem reduces to finding an optimal solution for user association, decoding order scheduling and power allocation to maximize the system efficiency. The reduced problem belongs to the classical Knapsack problem, a kind of classical NP-hard problem. Thus, Problem $P0$ is also NP-hard. Theorem 2 is proved.}
\end{proof}
\section{A System Energy Efficiency Maximization Scheme}
\begin{figure}
  \centering
  \includegraphics[width=0.70\linewidth]{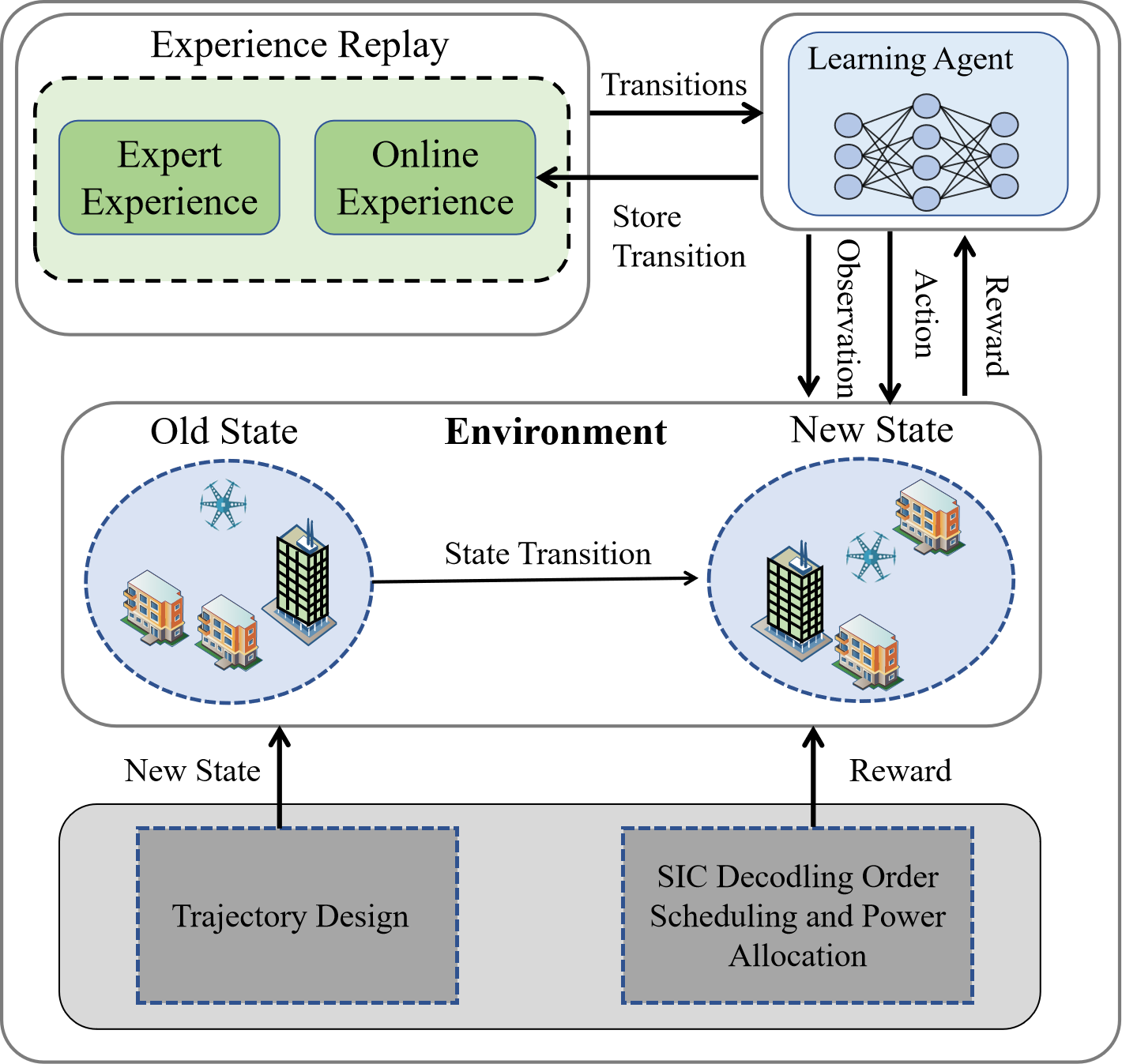}
  \caption{Structure of the proposed algorithm.}
  \label{fig2}
\vspace{-5mm}
\end{figure}
\indent Since Problem $P0$ is NP-hard, finding the optimal solution in polynomial time is impossible. In this section, we develop an alternating algorithm for maximizing the energy efficiency of the system and divide Problem $P0$ into three subproblems based on the coupling of variables. Specifically, given the previous iteration of UAV position, initial NOMA decoding order and power allocation for each iteration, we first develop an inverse soft-Q learning based algorithm to obtain user association for IRSs. Since the SIC decoding order is determined by the current CSI~\cite{9277627,9953122}, and considering the fixed SIC decoding order in related study~\cite{9494520,9839554}, we assume the decoding order among different time slots is independent. To facilitate the design of the subsequent algorithms and to satisfy constraint $C9$, we initialize the SIC decoding order and power allocation at each time slot. Next, for given user association, SIC decoding order and power allocation, optimizing UAV trajectories by integrating SCA and Dinkelbach's method. Finally, we fix the user association of IRSs, and the UAV location, and employ penalty-based SCA to optimize the SIC decoupling order and power allocation.
\subsection{User Association Determination Based on Improved Inverse Soft-Q Learning}

\indent In each iteration, we substitute UAV coordinate $L[t]$, initial SIC decoding order $w^{0}$ and power allocation $p^{0}$ into Problem $P0$. As a result, we can obtain Problem $P1$ by:
\begin{align*}\label{17}
\small
&P1:\underset{\alpha }{\mathop{\text{maximize}}}\,\text{ }\sum\limits_{t=1}^{T}{\frac{\sum\limits_{i=1}^{N}{{{R}_{i}}[}t]}{{{P}_{\text{sum}}}[t]}}, \tag{17}\\ 
&\text{s.t.}\text{ } C1-C3 \text{ }in \text{ }P0. 
\end{align*}
\indent Although merely binary variable $\alpha$ is considered in Problem $P1$, it is still an integer nonlinear programming problem. Problem $P1$ can be traversed by the enumeration method to obtain the optimal solution for energy efficiency. However, the time complexity of the method is exponential with respect to the number of users and IRSs, and the time complexity with $N$ users and $S$ IRSs is ${\mathrm O}({{N}^{S}})$. Therefore, we propose a user association algorithm based on inverse soft-Q learning, which can rapidly converge and approximate optimal solutions in dynamic environments by learning from expert datasets. Compared to deep reinforcement learning (DRL), inverse soft-Q learning approximates the optimal solution more closely over a smaller number of environment interactions.
 
 \indent We can start with a small number of iterations using the traversal method to obtain some results as expert datasets. Then, our proposed method based on inverse soft-Q learning can be used for online learning with expert datasets, which enable the agent to make effective user association decisions. Dataset $\mathcal{D}$ represents expert datasets containing state-action pairs $\left\langle \mathbb{S}\text{ }\!\![\!\!\text{ }t\text{ }\!\!]\!\!\text{ },\mathcal{A}[t] \right\rangle$ to characterize the optimal decisions in these states. Since the expert strategy is unknown, we can learn by sampling the expert dataset to enable the agent to make behavior close to the expert.

\indent Specifically, define action $\mathcal{A}[t]\in\mathfrak{A}$ to represent binary variable $\alpha$ chosen by the agent at state $\mathbb{S}\text{ }\!\![\!\!\text{ }t\text{ }\!\!]\!\!\text{ }\in \mathfrak{S}$, and the state transitions to $\mathbb{S}[t+1]$ after taking the action. Symbol $\mathfrak{A}$ denotes the set of actions, and action $\mathcal{A}[t]=[{{\alpha }_{1,1}}[t],{{\alpha }_{1,2}}[t],...,{{\alpha }_{s,i}}[t],...{{\alpha }_{S,N}}[t]]$. Symbol $\mathfrak{S}$ denotes the state set, and $\mathbb{S}[t]$ denotes the position of the UAV at time slot $t$. According to the optimization objective of maximizing energy efficiency, one-step reward function $\Re[t]$ is set as the energy efficiency resulting from executing the user association policy corresponding to action $\mathcal{A}[t]$, which is denoted by $\Re \text{ }\!\![\!\!\text{ }t\text{ }\!\!]\!\!\text{ =}{\sum\limits_{i=1}^{N}{{{R}_{i}}[}t]}/{{{P}_{\text{sum}}}[t]}$. The pseudo-code for the specific training process is shown by Algorithm 1.
\begin{algorithm}[!h]
    \caption{Low Complexity Training Process of The Learning Agent}
    \label{A1}
    \renewcommand{\algorithmicrequire}{\textbf{Input:}}
    \renewcommand{\algorithmicensure}{\textbf{Output:}}
    \begin{algorithmic}[1]
        \Require 
			Expert dateset $\mathcal{D}$, batch size $B$, learning rates $\mathfrak{L}$, policy $\pi$, initial Q-function $\mathcal{Q}_{\mathfrak{u}}$ and  corresponding parameter $\mathfrak{u}[t]$.
        \Ensure Learned policy $\pi$.
        
        \State Initialize initial state representation $\mathbb{S}[0]$, empty memory and load expert dateset $\mathcal{D}$.

        \For{each $t = 1,2,...,T$}
			\State Obtain $\pi$ through policy network.
			\State Select action $\mathcal{A}[t]$ based on $\pi$ and state $\mathbb{S}[t]$.
			\State Obtain $\mathbb{S}[t+1]$ and $\Re[t]$ by solving Problems $P2$ and $P3$.
			\State Store transition $\{\mathbb{S}[t],\mathcal{A}[t],\mathbb{S}[t+1],\Re[t],t \}$ in the online memory.
            \If {the online memory size equal to the expert memory size}
                \State Sample a batch of state transition $\{\mathbb{S}[t],\mathcal{A}[t],\mathbb{S}[t+1],\Re[t],t \}$ with size $B$.
                \State Train Q-function based on equation (\ref{eq18}) to get $\mathcal{Q}\text{*}_{\mathfrak{u}}$:	
                \State $\mathfrak{u}[t+1] \gets \mathfrak{u}[t] \nabla_{\mathfrak{u}}[-\mathcal{J}\left( \mathcal{Q}_{\mathfrak{u}} \right)]$.
                \State Recover policy $\pi{\text{*}}$:
                \State $\text{ }\!\!\pi\!\!\text{ *=}\frac{1}{\mathcal{Z}}\exp \mathcal{Q}_{\mathfrak{u}}\text{*}$. 
            \EndIf
        \EndFor
    \end{algorithmic}
\end{algorithm}

\indent Fig. 2 illustrates the overall framework diagram for Algorithm 1, which mainly consists of experience replay (ER), the agent, and the environment. Note that ER consists of two parts, the expert experience and the online experience~\cite{garg2021iqlearn}. Before model training, the agent needs to interact with the environment to generate the initial transitions that can be stored in the ER. In each step, the agent chooses an action with the largest Q-value based on the current state, to get a reward from the environment and proceed to the next state. Then, one step transition $\{\mathbb{S}[t]\text{,}\mathcal{A}[t],\mathbb{S}[t+1],\Re [t],t\}$ is generated and stored in ER. The training process starts when there are enough ER transitions. Specifically, a certain batch of transitions is randomly sampled from ER, and the batch of experience consists of both expert experience and online experience.

\indent To begin, we introduce an inverse soft-Q objective function to update the Q-function~\cite{garg2021iqlearn}, which is obtained by transforming the maximum entropy inverse reinforcement learning objective function from the reward-policy space to the Q-policy space. The maximum entropy inverse reinforcement learning objective function aims to find a reward function that confers high rewards to expert policy and low rewards to other policies. As a result, Q-function is updated by:
\begin{align*}\label{eq18}
&\mathcal{J}\left( \mathcal{Q} \right)={{\mathbb{E}}_{{{\text{ }\!\!\pi\!\!\text{ }}_{E}}}}\left[ \mathcal{F}\left( \mathcal{Q}\left( \mathbb{S}\text{,}\mathcal{A} \right)-l{{\mathbb{E}}_{\mathbb{S}'\sim \Pi \left( \left. \cdot \right|\mathbb{S},\mathcal{A} \right)}}\mathbb{V}\left( \mathbb{S}' \right) \right) \right] \tag{18}\\
&-\left( 1-l \right){{\mathbb{E}}_{\text{ }\!\!\pi'\!\!\text{ }}}\left[ \mathbb{V}\left( \tilde{\mathbb{S}} \right) \right], 
\end{align*}
\noindent where symbols ${{\pi }_{E}}$ and $\pi'$ denote the occupancy measures of the expert strategy and the agent's initial strategy, respectively, and $l\in \left( 0,1 \right)$ denotes the discount factor. Function $\mathcal{F}\left( \text{x} \right)=\text{x}-{{\text{x}}^{2}}/4\mathcal{C}$, where $\mathcal{C}>0$ denotes the deflation factor of the original divergence. Variable $\mathcal{Q}\left( \mathbb{S}\text{,}\mathcal{A} \right)$ denotes the Q-value obtained by executing action $\mathcal{A}[t]$ at state $\mathbb{S}[t]$, and $\Pi \left( \left. \cdot \right|\mathbb{S},\mathcal{A} \right)$ denotes the state distribution of the next state by executing action $\mathcal{A}[t]$ at state $\mathbb{S}[t]$. Soft value function is denoted by $\mathbb{V}\left( \mathbb{S}' \right)=\log \sum\nolimits_{\mathcal{A}}{\exp \mathcal{Q}\left( \mathbb{S}\text{ }\!\!'\!\!\text{,}\mathcal{A} \right)}$, where $\mathbb{S}'$ and $\tilde{\mathbb{S}}$ denote the next state and the initial state, respectively.

\indent After updating the Q-function, we recover the policy of the agent from the Q-function. Denote the updated Q-function by $\mathcal{Q}\text{*}$ and the updated policy by $\text{ }\!\!\pi\!\!\text{*}$, which can be expressed by $\text{ }\!\!\pi\!\!\text{ *=}\frac{1}{\mathcal{Z}}\exp \mathcal{Q}\text{*}$, where $\mathcal{Z}$ denotes the normalization factor. Finally, we update the parameters of the agent to enable it to make user association decision $\alpha$ that is closer to the expert.

\subsection{Trajectory Optimization Based on SCA and Dinkelbach}

\indent Problem $P2$ can be obtained by introducing solution $P1$ $\alpha[t]$ of Problem P1, power allocation $p^{0}$ and SIC decoding orders $w^{0}$ into Problem $P0$:
\begin{align*}\label{eq19}
P\text{2: }&\underset{L}{\text{maximize}}\,{\frac{\sum\limits_{i=1}^{N}{{{R}_{i}}[}t]}{ {{P}_{M}[t]}+{{P}_{Y}}[t]}},\tag{19} \\ 
\text{s.t.}\text{ } &C4-C7\text{ }in \text{ }P0.
\end{align*}
\noindent However, Problem $P2$ is still a fractional nonconvex optimization function. Considering that the objective function is fractional, the maximum system energy efficiency can be denoted by a non-negative parameter:
\begin{align*}\label{eq20}
\varpi = {\frac{\sum\limits_{i=1}^{N}{{{R}_{i}}[t]}}{{{P}_{M}[t]}+{{P}_{Y}}[t] }}.\tag{20}
\end{align*}
\indent Then, we can use the Dinkelbach method to transform the objective function (\ref{eq19}) into a simplified form:
\begin{align*}\label{eq21}
\underset{L}{\text{maxmize}}\,{\sum\limits_{i=1}^{N}{R_{i}^{{}}[t]}}-\varpi{\left( {{P}_{M}[t]}+{{P}_{Y}}[t] \right)}.\tag{21}
\end{align*}
\indent When ${{L}^{*}}$ is the optimal solution, the condition for ${{\varpi}^{*}}$ to be the optimal system energy efficiency of Problem $P2$ is given by Theorem 3.
\begin{theorem}
	Symbols ${{\varpi }^{*}}$ and ${{L}^{*}}$ can be considered as the optimal system energy efficiency and the optimal solution of Problem $P2$, respectively, if and only if:
\begin{align*}\label{eq22}
\text{max} \left\{{\sum\limits_{i=1}^{N}{R_{i}^{{}}[t]}}-{{\varpi }^{*}}{\left( {{P}_{M}[t]}+{{P}_{Y}}[t] \right)}\right\}=0 .\tag{22}
\end{align*}
\end{theorem}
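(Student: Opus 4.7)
The plan is to recognize Theorem 3 as an instance of the classical Dinkelbach equivalence for fractional programming, and to prove the biconditional by handling the two implications separately. To keep the notation light I would abbreviate $\mathcal{R}(L) = \sum_{i=1}^{N} R_i[t]$ and $\mathcal{P}(L) = P_M[t] + P_Y[t]$, both viewed as functions of the trajectory $L$ over the feasible set defined by $C4$--$C7$ of $P0$, and I would record up front that $\mathcal{P}(L) > 0$ follows from Equation (\ref{eq15}) together with nonnegativity of the communication-power term, since dividing or multiplying by $\mathcal{P}(L)$ without reversing inequalities is the one place sign control matters.

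For the necessity direction I would assume that $L^*$ is optimal for Problem $P2$ with optimal ratio $\varpi^* = \mathcal{R}(L^*)/\mathcal{P}(L^*)$. By definition of optimality, $\mathcal{R}(L)/\mathcal{P}(L) \leq \varpi^*$ for every feasible $L$, and multiplying through by the positive quantity $\mathcal{P}(L)$ gives $\mathcal{R}(L) - \varpi^* \mathcal{P}(L) \leq 0$, with equality at $L = L^*$. Hence the maximum of $\mathcal{R}(L) - \varpi^* \mathcal{P}(L)$ over the feasible set equals $0$ and is attained at $L^*$, which is exactly the right-hand side of Equation (\ref{eq22}).

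For the sufficiency direction I would suppose that there exists $L^*$ attaining $\max_{L}\{\mathcal{R}(L) - \varpi^* \mathcal{P}(L)\} = 0$. Evaluating at $L^*$ yields $\mathcal{R}(L^*) - \varpi^* \mathcal{P}(L^*) = 0$, so $\mathcal{R}(L^*)/\mathcal{P}(L^*) = \varpi^*$ and the target ratio is achievable. For any other feasible $L$, the inequality $\mathcal{R}(L) - \varpi^* \mathcal{P}(L) \leq 0$ combined with $\mathcal{P}(L) > 0$ gives $\mathcal{R}(L)/\mathcal{P}(L) \leq \varpi^*$, so no feasible trajectory yields a strictly larger energy efficiency than $\varpi^*$. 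Consequently $\varpi^*$ is optimal for $P2$ and $L^*$ attains it, closing the equivalence.

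The main obstacle will be conceptual rather than computational: one must be careful that the ``$\max$'' in Equation (\ref{eq22}) is understood over the same feasible set as $P2$ (constraints $C4$--$C7$ inherited from $P0$), and that the strict positivity of $\mathcal{P}(L)$ is explicitly invoked when passing between the fractional and the parameterized forms. Beyond that, the argument is purely algebraic and does not interact with the nonconvexity of $\mathcal{R}(L)$; its role in the paper is simply to justify replacing the fractional objective in Equation (\ref{eq19}) by the parameterized difference in Equation (\ref{eq21}), which is then attacked by the SCA--Dinkelbach iteration outlined in the remainder of the subsection.
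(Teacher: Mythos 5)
Your proposal is correct and follows essentially the same route as the paper's proof: the classical Dinkelbach equivalence, passing between the fractional objective and the parameterized difference by multiplying or dividing by the strictly positive power $P_M[t]+P_Y[t]$. In fact your write-up is slightly more complete than the paper's, which only sketches the necessity direction (and contains a sign slip, writing $\mathfrak{j}+\varpi^*\mathfrak{k}$ where $\mathfrak{j}-\varpi^*\mathfrak{k}$ is meant), whereas you explicitly handle both implications and state where positivity of the denominator is used.
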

\begin{proof}
\textit{Define $\mathfrak{F}\left( \varpi  \right)={\sum\limits_{i=1}^{N}{R_{i}^{{}}[t]}}-\varpi{\left( {{P}_{M}[t]}+{{P}_{Y}}[t] \right)}$, $\mathfrak{j}\left( L \right)={\sum\limits_{i=1}^{N}{{{R}_{i}}[t]}}$ and $\mathfrak{k}\left( L \right)= {{P}_{M}[t]}+{{P}_{Y}}[t] $. Thus, we have $\frac{\mathfrak{j}\left( {{L}^{*}} \right)}{\mathfrak{k}\left( {{L}^{*}} \right)}={{\varpi }^{*}}\Rightarrow \mathfrak{j}\left( {{L}^{*}} \right)+{{\varpi }^{*}}\mathfrak{k}\left( {{L}^{*}} \right)=0$. Let $\tilde{L}$ be the arbitrary feasible solution of Problem $P2$, and we have $\frac{\mathfrak{j}\left( {\tilde{L}} \right)}{\mathfrak{k}\left( {\tilde{L}} \right)}\le {{\varpi }^{*}}\Rightarrow \mathfrak{j}\left( {\tilde{L}} \right)+{{\varpi }^{*}}\mathfrak{k}\left( {\tilde{L}} \right)\le 0$. Theorem 3 is proved.}
\end{proof}
\indent Therefore, according to Theorem 3, the optimal solutions of Problems $P2$ and $P2'$ are equivalent.
\begin{align*}\label{eq23}
P2':\text{ } &\underset{L}{\text{maxmize}}\,{\sum\limits_{i=1}^{N}{R_{i}^{{}}[t]}}-\varpi {\left({{P}_{M}[t]}+{{P}_{Y}}[t] \right)},\tag{23}\\ 
& \text{s.t.}\text{ } C4-C7\text{ }in \text{ }P0.
\end{align*}
\indent However, Problem $P2'$ is still nonconvex. First, we deal with the nonconvexity of ${{\left| Q_{i}^{\text{UG}}[t] \right|}^{2}}$ in $R_{i}[t]$. For tractable solutions, we reformulate it by:
\begin{align*}\label{eq24}
{{\left| Q_{i}^{\text{UG}}[t] \right|}^{2}}&=\frac{\mathbb{A}}{{{(d_{i}^{\text{UG}}[t])}^{\xi _{i}^{\text{UG}}}}}+\sum\limits_{\text{s}=1}^{S}{\frac{\mathbb{B}}{d_{s}^{\text{UR}}{{[t]}^{2}}}} +\tag{24}\\
&\sum\limits_{\text{s}=1}^{S}{\frac{\mathbb{D}}{d_{s}^{\text{UR}}[t]{{(d_{i}^{\text{UG}}[t])}^{\xi _{i}^{\text{UG}}/2}}}},
\end{align*}
\noindent where $\mathbb{A}=\beta \gamma _{i}^{\text{UG}}/(\gamma _{i}^{\text{UG}}+1)$, $\mathbb{B}=\frac{{{\alpha }_{s,i}}\left[ t \right]\gamma _{s,i}^{\text{RG}}{{A}^{2}}{{\beta }^{2}}{{J}_{b}}^{2}{{J}_{s,i}}{{\left[ t \right]}^{2}}}{\gamma _{\text{s,i}}^{\text{RG}}+1{{(d_{s,i}^{\text{RG}})}^{\xi _{s,i}^{\text{RG}}}}}$ and $\mathbb{D}=\frac{2{{\alpha }_{s,i}}\left[ t \right]A{{\beta }^{3/2}}{{J}_{b}}{{J}_{s,i}}\left[ t \right]}{{{(d_{s,i}^{\text{RG}})}^{\xi _{s,i}^{\text{RG}}/2}}}\sqrt{\frac{\gamma _{i}^{\text{UG}}}{\gamma _{i}^{\text{UG}}+1}}\sqrt{\frac{\gamma _{s,i}^{\text{RG}}}{\gamma _{s,i}^{\text{RG}}+1}}$. Further, we introduce slack variable $\{\mathcal{G}{{}_{i}}[t]=d_{i}^{\text{UG}}[t],\forall i\in N,t\in T\}$ and $\{{{\mathcal{R}}_{s}}[t]={d_{s}^{\text{UR}}},\forall s\in S,t\in T\}$ to further simplify Problem $P2$ for subsequent convexification, and equation (\ref{eq24}) can be converted into:
\begin{align*}\label{eq25}
{{\left| \tilde Q_{i}^{\text{UG}}[t] \right|}^{2}}&=\frac{\mathbb{A}}{\mathcal{G}{{}_{i}}[t]{{}^{\xi _{i}^{\text{UG}}}}}+\sum\limits_{s=1}^{S}{\frac{\mathbb{B}}{{{\mathcal{R}}_{s}}[t]{{}^{2}}}}+ \tag{25}\\
&\sum\limits_{s=1}^{S}{\frac{\mathbb{D}}{\mathcal{G}{{}_{i}}[t]{^{\xi _{i}^{\text{UG}}/2}}{{\mathcal{R}}_{s}}[t]}}.
\end{align*}
\indent To facilitate the following calculations and convexification, introducing auxiliary variables $\{{{\mathcal{H}}_{i}}[t]= \sum\limits_{j,j\ne i}^{N}{{{w}_{i,j}}[t]}{{p}_{j}}[t]+\frac{{{\delta }^{2}}}{{\left| \tilde Q_{i}^{\text{UG}}[t] \right|}^{2}},\forall i\in N,t\in T\}$, we can obtain:
\begin{align*}\label{eq26}
P2'':\text{ }&\underset{L,\mathcal{G},\mathcal{R},\mathcal{H},\mathcal{B}}{\mathop{\text{maxmize}}}\,{\sum\limits_{i=1}^{N}{{{\log }_{2}}\left( 1+\frac{{{p}_{i}}[t]}{{{\mathcal{H}}_{i}}[t]} \right)}}-\varpi {{P}_{\text{sum}}}[t], \tag{26}\\ 
&\text{s.t.}\text{ }C4-C6\text{ }in \text{ }P0, \\ 
 & \text{ }\text{ }\text{ }\text{ } \text{ }C12:\text{ } \mathcal{G}{{}_{i}}[t]{{}^{2}}\ge {{\left\| L[t]-{{L}_{i}} \right\|}^{2}},\forall i \in N,t \in T, \\ 
 & \text{ }\text{ }\text{ }\text{ } \text{ }C13:\text{ }{{\mathcal{R}}_{s}}[t]{{}^{2}}\ge {{\left\| L[t]-{{L}_{s}} \right\|}^{2}},\forall s \in S,t \in T. 
\end{align*}

\begin{algorithm}[t]
    \caption{Low Complexity Trajectory Optimization Algorithm}
    \label{A2}
    \renewcommand{\algorithmicrequire}{\textbf{Input:}}
    \renewcommand{\algorithmicensure}{\textbf{Output:}}
    \begin{algorithmic}[1]
        \Require UAV coordinate $L[t]$, and user association decisions $\{\alpha_{s,i}[t]\}_{s\in S,i \in N}$ from Problem $P1$, the maximum iteration numbers of main loop $\mathbb{Y}$ and inner loop $\mathbb{X}$.
        \Ensure UAV coordinate $L^{*}[t+1]$.
        \State Initialize the coordinate of UAV $L^{\varDelta }[t]=L^{\tilde{\varDelta}}[t]=L[t]$, $\varpi ^{ \varDelta }=\varpi ^{ \tilde{\varDelta}} = 0$, and iteration index of main loop $\varDelta = 1$ and inner loop $\tilde{\varDelta} = 1$, with the convergence tolerance $\epsilon_1 \rightarrow 0$ and $\epsilon_2 \rightarrow 0$.
		\State Main Loop: Dinkelbach
        \For{$\varDelta=1,2,...,\mathbb{Y}$}
		\State Inner Loop: SCA
		\For{$\tilde{\varDelta}=1,2,...,\mathbb{X}$}
\State  Obtain $\{\mathcal{G}{{}_{i}}^{{\tilde{\Delta }}}[t],{{\mathcal{R}}_{s}}^{{\tilde{\Delta }}}[t],\mathcal{H}_{i}^{{\tilde{\Delta }}}[t],\mathbb{V}^{{\tilde{\Delta }}}[t]\}$ by the definitions of auxiliary variables based on $\{L^{{\tilde{\Delta }}}[t], \{p_i^{0}[t]\}_{i \in N}, \{w_{i,j}^{0}[t]\}_{i,j \in N}, \{\alpha_{s,i}[t]\}_{s\in S,i \in N}\}$.

\State Given $\{L^{{\tilde{\Delta }}}[t],\mathcal{G}{{}_{i}}^{{\tilde{\Delta }}}[t],{{\mathcal{R}}_{s}}^{{\tilde{\Delta }}}[t],\mathcal{H}_{i}^{{\tilde{\Delta }}}[t],\mathbb{V}^{{\tilde{\Delta }}}[t]\}$, obtain $L^{\tilde{\varDelta}+1}[t]$ by sloving \textbf{$P2''''$}.
\State Obtain $\varpi^{\tilde{\varDelta}+1}$ by introducing $L^{\tilde{\varDelta}+1}[t]$ in equation (20).

 \If {$ R\left( L^{\tilde{\varDelta}+1}[t] \right) -\varpi ^{\tilde{\varDelta}} P_{sum} \left( L^{\tilde{\varDelta}+1}[t] \right) - 
R\left( L^{\tilde{\varDelta}}[t] \right) +\varpi ^{\tilde{\varDelta}} P_{sum} \left( L^{\tilde{\varDelta}}[t] \right)
\le \epsilon _2$}	
\State $L^{\varDelta+1}[t] = L^{\tilde{\varDelta}+1}[t]$, $\varpi ^{ \varDelta +1}=\varpi ^{ \tilde{\varDelta}+1}$
\State break
\EndIf  
       \EndFor 
		\If {$  R\left( L^{\varDelta +1}\left[ t \right] \right) -\varpi ^{ \varDelta}\left( P_M \left( L^{\varDelta +1}\left[ t \right] \right) +P_Y \left( L^{ \varDelta +1}\left[ t \right] \right) \right) \le \epsilon _1$	}	
\State $ L^{*}[t+1] = L^{\varDelta+1 }[t]$
\State break
\EndIf  
        \EndFor     
    \end{algorithmic}
\end{algorithm}

\indent Next, we continue to convexify Problem $P2''$ by employing SCA. In the internal iteration of SCA, given feasible solution $\{L^{{\tilde{\Delta }}}[t],\mathcal{G}{{}_{i}}^{{\tilde{\Delta }}}[t],{{\mathcal{R}}_{s}}^{{\tilde{\Delta }}}[t]\}$ for the $\tilde{\Delta }$ iteration, we can get $\{L^{{\tilde{\Delta }}}[t],\mathcal{G}{{}_{i}}^{{\tilde{\Delta }}}[t],{{\mathcal{R}}_{s}}^{{\tilde{\Delta }}}[t],\mathcal{H}_{i}^{{\tilde{\Delta }}}[t]\}$. Then, we can obtain the lower bound at the point by using the first-order Taylor expansion as follows:
\begin{align*}\label{eq27}
{{R}_{i}}[t]\ge {{\tilde{R}}_{i}}[t]&={{\log }_{2}}\left( 1+\frac{{{p}_{i}}[t]}{\mathcal{H}_{i}^{{\tilde{\Delta }}}[t]} \right)\tag{27}\\
&-\frac{{{p}_{i}}[t]{{\log }_{2}}(e)}{\mathcal{H}_{i}^{{\tilde{\Delta }}}[t](\mathcal{H}_{i}^{{\tilde{\Delta }}}[t]+{{p}_{i}}[t])}({{\mathcal{H}}_{i}}[t]-\mathcal{H}_{i}^{{\tilde{\Delta }}}[t]), 
\end{align*}
\noindent and the relaxation of constraints:
\begin{align*}
&C14:\text{ }\mathcal{G}{{}_{i}}^{{\tilde{\Delta }}}[t]{{}^{2}}+2\mathcal{G}{{}_{i}}^{{\tilde{\Delta }}}[t]\mathcal{G}{{}_{i}}[t]-\mathcal{G}{{}_{i}}^{{\tilde{\Delta }}}[t]\ge d_{i}^{\text{UG}}{{[t]}^{2}},\\
&\text{ }\text{ }\text{ }\text{ }\text{ }\text{ }\forall i\in N,t\in T, \\ 
&C15:\text{ }{{\mathcal{R}}^{{\tilde{\Delta }}}}_{s}[t]{{}^{2}}+2{{\mathcal{R}}^{{\tilde{\Delta }}}}_{s}[t]{{\mathcal{R}}_{s}}[t]-{{\mathcal{R}}^{{\tilde{\Delta }}}}_{s}[t]\ge d_{s}^{\text{UR}}{{[t]}^{2}},\\
&\text{ }\text{ }\text{ }\text{ }\text{ }\text{ }\forall s\in S,t\in T, \\ 
&C16:\text{ }{{\left| \tilde Q_{i}^{\text{UG}}[t] \right|}^{2}}\le {{\mathcal{B}}_{i}}[t],\forall i\in N,t\in T, 
\end{align*}
\noindent where ${{\mathcal{B}}_{i}}'[t]$ can be denoted by equation (\ref{eq28}) at the bottom of the next page.

\begin{figure*}[b]
\hrulefill
	\begin{align*}\label{eq28}
{{\mathcal{B}}_{i}}[t]&=\frac{\mathbb{A}}{\mathcal{G}_{i}^{{\tilde{\Delta }}}[t]{{}^{\xi _{i}^{UG}}}}+\sum\limits_{s=1}^{S}{\frac{\mathbb{B}}{\mathcal{R}_{s}^{{\tilde{\Delta }}}[t]{{}^{2}}}}+\sum\limits_{s=1}^{S}{\frac{\mathbb{D}}{{{(\mathcal{G}_{i}^{{\tilde{\Delta }}}[t])}^{\xi _{i}^{UG}/2}}\mathcal{R}_{s}^{{\tilde{\Delta }}}[t]}}-\frac{\xi _{i}^{UG}\mathbb{A}}{{{(\mathcal{G}_{\text{i}}^{{\tilde{\Delta }}}[t])}^{\xi _{i}^{UG}+1}}}{({\mathcal{G}}_{i}}[t]-\mathcal{G}_{i}^{{\tilde{\Delta }}}[t]) -\sum\limits_{s=1}^{S}{\frac{2\mathbb{B}}{{{(\mathcal{R}_{s}^{{\tilde{\Delta }}}[t])}^{3}}}{({\mathcal{R}}_{s}}[t]-\mathcal{R}_{s}^{{\tilde{\Delta }}}[t])}\\
&-\sum\limits_{s=1}^{S}{\frac{\xi _{i}^{UG}\mathbb{D}/2}{{{(\mathcal{G}_{i}^{{\tilde{\Delta }}}[t])}^{\xi _{i}^{UG}/2+1}}\mathcal{R}_{s}^{{\tilde{\Delta }}}[t]}(\mathcal{G}_{i}^{{}}[t]-\mathcal{G}_{i}^{{\tilde{\Delta }}}[t])} -\sum\limits_{s=1}^{S}{\frac{\mathbb{D}}{{{(\mathcal{R}_{s}^{{\tilde{\Delta }}}[t])}^{2}}\mathcal{G}_{i}^{{\tilde{\Delta }}}[t]}(\mathcal{R}{{}_{s}}[t]-\mathcal{R}_{s}^{{\tilde{\Delta }}}[t])}.\tag{28} 
	\end{align*}
\end{figure*}

\indent Now, the nonconvexity induced by $R_{i}[t]$ has been solved, and Problem $P2''$ can be approximated by:
\begin{align*}\label{eq29}
P2''':\text{ }&\underset{L,\mathcal{G},\mathcal{R},\mathcal{H},\mathcal{B}}{\mathop{\text{maxmize}}}\,{\sum\limits_{i=1}^{N}{{{{\tilde{R}}}_{i}}[t]}}-\varpi{\left( {{P}_{M}}+{{P}_{Y}}[t] \right)},\tag{29} \\ 
&\text{s.t.}\text{ }C4-C6\text{ }in \text{ }P0,C14-C16.  
\end{align*}
\indent Then, we tackle the nonconvexity of $P_{Y}[t]$ by introducing slack variable $\{\mathbb{V}[t]=\left\|(L[t]- L^{{\tilde{\Delta }}}[t])/\tau\right\|,\forall t\in T\}$ and constraint $C17$:
\begin{align*}\label{eq30}
C17:\text{ }{{\left\| \textbf{v}[t] \right\|}^{2}}\ge {{\mathbb{V}}^{2}}[t],\forall t\in T, \tag{30}
\end{align*}
\noindent and we can get $\tilde P_{Y}[t]$ by:
\begin{align*}\label{eq31}
\small
\tilde {P}_{Y}[t]=\mathcal{P}(1+\frac{3\parallel \textbf{v}[t]{{\parallel }^{2}}}{{{\omega }^{2}}{{r}^{2}}})+\frac{\mathcal{W}\mathcal{V}}{ \mathbb{V}[t]}+\frac{1}{2}\mathbb{G}\rho \varsigma \mathcal{M}\parallel \textbf{v}[t]{{\parallel }^{3}}.\tag{31}
\end{align*}
\indent By using the first-order Taylor expansion at feasible solution $\text{ }\!\!\{\!\!\text{ }{{\textbf{v}}^{{\tilde{\Delta }}}}[t]\}$ for the $\tilde{\Delta }$ iteration, $C17$ is approximated as $C18$ and Problem $P2'''$ is approximated as follows:
\begin{align*}\label{eq32}
\small
P2'''':&\underset{L,\mathcal{G},\mathcal{R},\mathcal{H},\mathcal{B},\mathbb{V}}{\mathop{\text{maxmize}}}\,{\sum\limits_{i=1}^{N}{{{{\tilde{R}}}_{i}}[t]}}-\varpi \left({{P}_{M}}[t]+{{P}_{Y}}[t]\right), \tag{32}\\ 
 \text{s.t.}\text{ }& C4-C6\text{ }in \text{ }P0,C14-C16,\\
 &C18:\text{ }{{\left\| {{\textbf{v}}^{{\tilde{\Delta }}}}[t] \right\|}^{2}}+2{{[{{\textbf{v}}^{{\tilde{\Delta }}}}[t]]}^{T}}(\textbf{v}[t]-{{\textbf{v}}^{{\tilde{\Delta }}}}[t])\ge {{\mathbb{V}}^{2}}[t],\\
&\text{ }\text{ }\text{ }\text{ }\text{ }\text{ }\text{ }\forall t\in T . 
\end{align*}
\indent Problem $P2''''$ is a convex optimization problem and the optimal solution can be obtained by CVX~\cite{9454446}. Note that the optimal solution of Problem $P2''''$ can provide a lower bound for Problem $P2$. The optimization process of Problem $P2$ is detailed in Algorithm 2.
\subsection{Joint SIC Decoupling Order and Power Allocation Optimization Based on Penalty and SCA}

\indent Finally, we substitute results of Problems $P1$ and $P2$ into $P0$, while $P3$ can be expressed as:
\begin{align*}\label{eq33}
P3:\text{ }&\underset{p,w}{\text{maximize}}\, \frac{\sum\limits_{i=1}^{N}{{{R}_{i}}[}t]}{{{P}_{M}[t]}+{{P}_{Y}}[t]}, \tag{33}\\ 
 & \text{s.t.}\text{ }C7-C11\text{ }in \text{ }P0. 
\end{align*}
\indent Since the denominator of Problem $P3$ is constant in the subsequent solution, it is equivalent to solving the following problem:
\begin{align*}\label{eq34}
P3':\text{ }&\underset{p,w}{\text{maximize}}\,{\sum\limits_{i=1}^{N}{{{R}_{i}}[}t]}, \tag{34}\\ 
 &\text{s.t.}\text{ }C7-C11\text{ }in \text{ }P0. 
\end{align*}
\indent First, we tackle the nonconvexity of the objective function caused by binary variable ${{w}_{i,j}}[t]$ by relaxing it to $[0,1]$, and we can obtain constraint $C19$:
\begin{align*}\label{eq35}
C19:\text{ }0\le {{w}_{i,j}}[t]\le 1,\text{ }\forall i\ne j\in N,t\in T. \tag{35}
\end{align*}
\indent To ensure that continuous variable ${{w}_{i,j}}[t]$ still holds value ``0" or ``1", we introduce a penalty term to Problem $P3'$ and denote the channel gain of user $i$ by $\mathbb{K}=|Q_{s,i}^{\text{UG}}[t]{{|}^{2}}$ to facilitate subsequent calculations. Thus, Problem $P3'$ can be transformed to Problem $P3''$ at the bottom of the next page.

\begin{figure*}[b]
\hrulefill
	\begin{align*}\label{eq36}
P3'':\text{ }&\underset{\textbf{p,w}}{\text{maximize}}\,\text{ }{\sum\limits_{i=1}^{N}{\left( {{\log }_{2}}\left( \mathbb{K}\sum\limits_{j,j\ne i}^{N}{{{w}_{j,i}}[t]}{{p}_{j}}[t]+{{\delta }^{2}}+{{p}_{i}}[t]\mathbb{K} \right)-{{\log }_{2}}\left( \mathbb{K}\sum\limits_{j,j\ne i}^{N}{{{w}_{j,i}}[t]}{{p}_{j}}[t]+{{\delta }^{2}} \right) \right)}} \tag{36}\\ 
 & \text{}-\zeta {\sum\limits_{i=1}^{N}{\sum\limits_{j,j\ne i}^{N}{({{w}_{i,j}}[t]}}}-{{({{w}_{i,j}}[t])}^{2}}), \\ 
 &\text{s.t.}\text{ }C8-C11\text{ }in \text{ }P0, C19.
	\end{align*}
\end{figure*}

\indent In the following, we convexify Problem $P3''$ by employing SCA to address the nonconvexity induced by the objective function and constraint $C9$. Let $\left\{ p_{i}^{{\tilde{\tilde{\Delta}}}}[t], {{w}^{{\tilde{\tilde{\Delta}}}}}_{i,j}[t] \right\}$ denote the feasible solution in iteration ${\tilde{\tilde{\Delta}}}$ of Problem $P3$. The lower bound of Problem $P3''$ at that point, i.e., Problem $P3'''$, is obtained by the first-order Taylor expansion, which is shown at the bottom of the next page. Constraint $C9$ is approximated by:
\begin{align*}\label{eq37}
C20:\text{ }{{p}_{j}[t]}\ge &{{w}^{{\tilde{\tilde{\Delta }}}}}_{i,j}[t]{{p}^{{\tilde{\tilde{\Delta }}}}}_{i}[t]+{{w}^{{\tilde{\tilde{\Delta }}}}}_{i,j}[t]\left ({{p}_{i}}[t]-{{p} ^{{\tilde{\tilde{\Delta }}}}}_{i}[t] \right )+\\
&{{p}^{{\tilde{\tilde{\Delta }}}}}_{i}[t] \left ({{w}_{i,j}}[t]-{{w}^{{\tilde{\tilde{\Delta }}}}}_{i,j}[t]\right ),\forall i\ne j\in N, t\in T.\tag{37}
\end{align*}
\indent Then, Problem $P3'''$ is convex and can be solved by the CVX.
\begin{figure*}[b]
\hrulefill
	\begin{align*}\label{eq38}
\tiny
P3''':\text{ } &\underset{\textbf{p,w}}{\text{maximize}}\,{\sum\limits_{i=1}^{N}{\left( {{\log }_{2}}\left( \sum\limits_{j,j\ne i}^{N}{{{w}^{{\tilde{\tilde{\Delta }}}}}_{i,j}[t]}{{p}^{{\tilde{\tilde{\Delta }}}}}_{j}[t]+\frac{{\delta }^{2}}{\mathbb{K}}+{{p}^{{\tilde{\tilde{\Delta }}}}}_{i}[t] \right)-{{\log }_{2}}\left( \sum\limits_{j,j\ne i}^{N}{{{w}^{{\tilde{\tilde{\Delta }}}}}_{i,j}[t]}p_{j}^{{\tilde{\tilde{\Delta }}}}[t]+\frac{{\delta }^{2}}{\mathbb{K}} \right) \right)}} \tag{38}\\ 
 & \text{ }+{\sum\limits_{i=1}^{N}{\frac{\sum\limits_{j,j\ne i}^{N}{{{w}^{{\tilde{\tilde{\Delta }}}}}_{j,i}[t]}\left( p_{j}^{{}}[t]-p_{j}^{{\tilde{\tilde{\Delta }}}}[t] \right)+{{p}^{{}}}_{i}[t]-{{p}^{{\tilde{\tilde{\Delta }}}}}_{i}[t]+\sum\limits_{j,j\ne i}^{N}{p_{j}^{{\tilde{\tilde{\Delta }}}}[t]}\left( {{w}_{j,i}}[t]-{{w}^{{\tilde{\tilde{\Delta }}}}}_{j,i}[t] \right)}{\sum\limits_{j,j\ne i}^{N}{{{w}^{{\tilde{\tilde{\Delta }}}}}_{j,i}[t]}p_{j}^{{\tilde{\tilde{\Delta }}}}[t]+\frac{{\delta }^{2}}{\mathbb{K}}+{{p}^{{\tilde{\tilde{\Delta }}}}}_{i}[t]}{{\log }_{2}}e}} \\ 
 & \text{  }-{\sum\limits_{i=1}^{N}{\frac{\sum\limits_{j,j\ne i}^{N}{{{w}^{{\tilde{\tilde{\Delta }}}}}_{j,i}[t]}\left( p_{j}^{{}}[t]-p_{j}^{{\tilde{\tilde{\Delta }}}}[t] \right)+\sum\limits_{j,j\ne i}^{N}{p_{j}^{{\tilde{\tilde{\Delta }}}}[t]}\left( {{w}_{j,i}}[t]-{{w}^{{\tilde{\tilde{\Delta }}}}}_{j,i}[t] \right)}{\sum\limits_{j,j\ne i}^{N}{{{w}^{{\tilde{\tilde{\Delta }}}}}_{j,i}[t]}p_{j}^{{\tilde{\tilde{\Delta }}}}[t]+\frac{{\delta }^{2}}{\mathbb{K}}}{{\log }_{2}}e}} \\ 
 & \text{  }-\zeta \sum\limits_{i=1}^{N}{\sum\limits_{j,j\ne i}^{N}{({{w}^{{\tilde{\tilde{\Delta }}}}}_{i,j}[t]-{{({{w}^{{\tilde{\tilde{\Delta }}}}}_{i,j}[t])}^{2}}+2{{w}^{{\tilde{\tilde{\Delta }}}}}_{i,j}[t]({{w}_{i,j}}[t]-{{w}^{{\tilde{\tilde{\Delta }}}}}_{i,j}[t]))}},\\
 &\text{s.t.}\text{ }C8, C10, C11\text{ }in \text{ }P0, C19, C20.
	\end{align*}
\end{figure*}
\subsection{Computational Complexity and Convergence Analysis of AISLE Algorithm}

\indent We solve Problems $P1$, $P2$ and $P3$ alternately until convergence, and the convergence of the proposed algorithm is proved by Theorem 4.
\begin{theorem}
The AISLE algorithm is convergent.
\end{theorem}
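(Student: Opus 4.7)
The plan is to prove convergence of the AISLE algorithm in two stages: first, show that each of the three sub-routines produces a monotonically non-decreasing sequence of energy-efficiency values; second, invoke boundedness of the objective to conclude convergence of the alternating scheme. Denote the overall objective of Problem $P0$ at the $k$-th outer iterate $(\alpha^{k}, L^{k}, w^{k}, p^{k})$ by $\Psi^{k}$. My goal is to establish $\Psi^{k+1} \ge \Psi^{k}$ for all $k$, together with $\Psi^{k}\le \Psi_{\max}<\infty$, so that the monotone bounded sequence theorem applies.

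For the user-association step solving $P1$, I would appeal to the convergence of the inverse soft-Q learning update in \cite{garg2021iqlearn}: the Q-function update driven by equation (\ref{eq18}) induces a contraction-like behavior in the Q-policy space, so the recovered policy $\pi^{*}=\frac{1}{\mathcal{Z}}\exp\mathcal{Q}^{*}$ stabilizes after enough samples and selects an $\alpha^{k+1}$ whose one-step reward $\Re[t]$ is no less than that of $\alpha^{k}$; by construction this reward coincides with the energy-efficiency objective of $P1$. For the trajectory step, the inner SCA loop builds the lower bound $\tilde R_{i}[t]$ in equation (\ref{eq27}) via first-order Taylor expansion, which is tight at the current feasible point; hence any SCA iterate that increases the surrogate objective of $P2''''$ also increases the true objective of $P2''$. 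The Dinkelbach outer loop then yields a monotone sequence $\{\varpi^{\Delta}\}$ that converges to the fractional optimum by Theorem~3 and the standard argument that $\mathfrak{F}(\varpi)$ is a continuous, strictly decreasing function of $\varpi$ with a unique root. Consequently the trajectory step returns $L^{k+1}$ with $\Psi(\alpha^{k+1},L^{k+1},w^{k},p^{k})\ge \Psi(\alpha^{k+1},L^{k},w^{k},p^{k})$. The joint SIC and power-allocation step on $P3$ uses the same SCA machinery with a quadratic penalty; the first-order surrogate in $P3'''$ is again a tight lower bound at the previous iterate and the penalty term $\zeta\sum(w_{i,j}-w_{i,j}^{2})$ is convex in $w$ once relaxed, so each inner iterate weakly improves the objective, giving $\Psi(\alpha^{k+1},L^{k+1},w^{k+1},p^{k+1})\ge \Psi(\alpha^{k+1},L^{k+1},w^{k},p^{k})$.

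Chaining the three inequalities yields $\Psi^{k+1}\ge\Psi^{k}$. For the upper bound, I would note that the numerator $\sum_{i}R_{i}[t]$ is bounded because the transmit power budget $C10$ caps $p_{i}[t]$, the composite channel gain in equation (\ref{eq12}) is finite whenever the UAV stays at nonzero distance from users and IRSs (enforced by $C6$), and the denominator ${P}_{\text{sum}}[t]$ is strictly positive due to the nonzero hover/blade-profile power in equation (\ref{eq15}). Hence $\Psi^{k}$ is bounded above, and a monotone bounded sequence converges, proving Theorem 4.

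The main obstacle I anticipate is the inverse soft-Q learning step, since unlike the SCA and Dinkelbach routines it is a sample-based learning procedure rather than a deterministic descent, so its monotonic improvement statement is only asymptotic and must be argued carefully by separating the exploration phase (where $\Psi$ may temporarily oscillate) from the post-convergence regime of the Q-function; once $\mathcal{Q}^{*}$ stabilizes, the recovered $\alpha^{k+1}$ can be compared to the expert-level choice, which by construction dominates any previously feasible $\alpha^{k}$. The SCA and Dinkelbach pieces, in contrast, are standard and their monotone improvement follows directly from the tightness of the surrogates constructed in equations (\ref{eq27})--(\ref{eq28}) and (\ref{eq38}).
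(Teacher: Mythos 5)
Your treatment of the trajectory and SIC/power blocks, and your boundedness argument, are essentially the paper's own proof: the paper likewise argues that the SCA surrogates in equations (\ref{eq27}), (\ref{eq28}) and (\ref{eq38}) are tight lower bounds at the current feasible point (so each inner iterate weakly improves the true objective until the tolerance is met), that the Dinkelbach outer loop is monotone by Theorem~3, and that the sum rate is bounded for a fixed power budget, which together give convergence of those iterative loops.

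The genuine gap is your user-association step. You need $\Re[t](\alpha^{k+1})\ge\Re[t](\alpha^{k})$ to chain $\Psi^{k+1}\ge\Psi^{k}$, but inverse soft-Q learning gives you no such guarantee: the update in equation (\ref{eq18}) minimizes an imitation objective (a divergence between the agent's and the expert's occupancy measures in Q-policy space), not the energy efficiency of Problem $P1$, so the policy $\pi^{*}=\frac{1}{\mathcal{Z}}\exp\mathcal{Q}^{*}$ recovered at iteration $k+1$ can perfectly well select an $\alpha^{k+1}$ with lower reward than $\alpha^{k}$, especially since it is trained from sampled batches and only approximates the expert. Your fallback -- that after the Q-function ``stabilizes'' the chosen $\alpha$ matches an expert choice that ``dominates any previously feasible $\alpha^{k}$'' -- does not hold either: the expert data are generated by limited enumeration under the \emph{initial} $w^{0},p^{0}$ and from perturbed starting states, so the expert action is not an optimizer of the current iterate's objective, and the learned policy is only an approximation of that expert. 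So the block-wise monotone-ascent framework you set up cannot be closed at this block without an additional safeguard (e.g., accept $\alpha^{k+1}$ only if it does not decrease the reward, otherwise keep $\alpha^{k}$). Notably, the paper does not attempt this step at all: its proof confines the monotonicity claims to the SCA inner loops and the Dinkelbach outer loop and invokes boundedness of the rate, treating the learned association as a given feasible input per time slot rather than as an ascent update, so your proposal is simultaneously more ambitious than the published argument and unsupported exactly where it goes beyond it.
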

\begin{proof}
\textit{The convergence of the overall algorithm can be guaranteed by the iteration of the AISLE algorithm. Locally, the inner loop to solve Problems $P2$ and $P3$ are based on SCA-based algorithms. The SCA guarantees that the feasible solution converges if the initial solution is feasible~\cite{10044705}, i.e., assuming that there is a feasible solution $\{\mathcal{X}\}$, and a new feasible solution $\{\mathcal{X'}\}$ can be obtained by executing SCA with iterations until the difference between the objective function values of them is less than the convergence tolerance, i.e., $\mathbb {R}\{\mathcal{X}\}- \mathbb {R}\{\mathcal{X'}\} \le \epsilon$, where $\{\mathbb{R}\}$ denotes the objective function and $\epsilon$ denotes convergence tolerance. In fact, there is an upper bound on the transmission rate for a fixed transmit power, so that the convergence is guaranteed. In addition, from Theorem 3, the Dinkelbach method in the outer layer to solve Problem $P2$ guarantees that the objective value monotonically increases up to a smooth point. Therefore, the whole algorithm is guaranteed to converge. Theorem 4 is proved.}
\end{proof}
\indent The time complexity of the AISLE algorithm is given by Theorem 5.
\begin{theorem}
The total time complexity of the AISLE algorithm can be expressed by:
\begin{align*}\label{eq39}
\small
&{\mathrm O}\Biggl(T(\sum\limits_{k=1}^{\mathcal{K}}{M_{k}\times M_{k+1}}+\mathbb{M}\mathbb{N}^{2}\mathbb{X}\mathbb{Y}( \sqrt{\mathbb{N}}\log (\frac{1}{\epsilon }))+\\
&\tag{39}\mathbb{Q}\mathbb{P}^{2}\mathbb{Z}( \sqrt{\mathbb{P}}\log (\frac{1}{\epsilon '})))\Biggr).
\end{align*}
\end{theorem}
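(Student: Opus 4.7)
The plan is to exploit the alternating structure of AISLE: at every one of the $T$ time slots, the algorithm performs one pass through Problem $P1$ (inverse soft-Q learning), one pass through Problem $P2$ (Dinkelbach's outer loop wrapping SCA), and one pass through Problem $P3$ (penalty-based SCA). Because these are run sequentially rather than nested, the per-slot cost is additive, so the total complexity is $T$ times the sum of the three per-slot costs. I would open the proof by stating this decomposition explicitly and then bound each term in turn.

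For the $P1$ block, the dominant cost per gradient step comes from the forward/backward pass through the policy and Q-networks used in Algorithm~1. Assuming a fully-connected architecture with $\mathcal{K}$ layers of sizes $M_1,\dots,M_{\mathcal{K}+1}$, each matrix--vector product costs $\mathcal{O}(M_k M_{k+1})$, and summing across layers gives the $\sum_{k=1}^{\mathcal{K}} M_k\times M_{k+1}$ term. I would note that sampling from $\mathcal{D}$, the softmax recovery $\pi{*}=\exp\mathcal{Q}_{\mathfrak{u}}{*}/\mathcal{Z}$, and the one-step reward evaluation are all dominated by this neural-network cost, so no extra terms need to be carried.

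For the $P2$ block, I would invoke the standard complexity of a primal--dual interior-point method applied to Problem $P2''''$: with $\mathbb{N}$ decision variables and $\mathbb{M}$ (convex) constraints, each solve costs $\mathcal{O}(\mathbb{M}\mathbb{N}^{2}\sqrt{\mathbb{N}}\log(1/\epsilon))$ to reach $\epsilon$-accuracy. Wrapping this in the SCA inner loop ($\mathbb{X}$ iterations) and the Dinkelbach outer loop ($\mathbb{Y}$ iterations) yields the $\mathbb{M}\mathbb{N}^{2}\mathbb{X}\mathbb{Y}\sqrt{\mathbb{N}}\log(1/\epsilon)$ term; convergence of both loops is already guaranteed by Theorems~3 and~4, so only multiplicities matter here. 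An identical argument with variable count $\mathbb{P}$, constraint count $\mathbb{Q}$, SCA iterations $\mathbb{Z}$, and tolerance $\epsilon'$ applied to the convexified Problem $P3'''$ produces the $\mathbb{Q}\mathbb{P}^{2}\mathbb{Z}\sqrt{\mathbb{P}}\log(1/\epsilon')$ term; the penalty parameter $\zeta$ only enters constants and does not change the order.

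Collecting the three bounds and factoring out $T$ yields exactly the expression claimed in the theorem. The main obstacle I anticipate is not algebraic but book-keeping: I have to be careful that $\mathbb{M},\mathbb{N}$ truly count only the variables and constraints of $P2''''$ after the slack variables $\mathcal{G}_i[t], \mathcal{R}_s[t], \mathcal{H}_i[t], \mathcal{B}_i[t], \mathbb{V}[t]$ have been introduced (and similarly for $\mathbb{P},\mathbb{Q}$ in $P3'''$), and that no hidden dependence on $N$ or $S$ is absorbed into the interior-point constants in a way that would inflate the stated order. Once that accounting is fixed, the three costs combine additively inside the $T$ factor and Theorem~5 follows.
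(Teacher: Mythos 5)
Your proposal matches the paper's own argument essentially step for step: the per-slot additive decomposition into the three blocks, the neural-network cost $\sum_{k=1}^{\mathcal{K}} M_k\times M_{k+1}$ for Algorithm~1, the interior-point bound $\mathcal{O}(\mathbb{M}\mathbb{N}^{2}\sqrt{\mathbb{N}}\log(1/\epsilon))$ multiplied by the SCA and Dinkelbach iteration counts $\mathbb{X},\mathbb{Y}$ for Problem $P2$, the analogous $\mathbb{Q}\mathbb{P}^{2}\mathbb{Z}\sqrt{\mathbb{P}}\log(1/\epsilon')$ term for Problem $P3'''$, and the final factor of $T$. The only detail the paper adds that you flag as remaining book-keeping is the explicit variable/constraint counts ($\mathbb{N}=2+3N+S$, $\mathbb{M}=3+2N+S$, $\mathbb{P}=N^{2}+N$, $\mathbb{Q}=3(N^{2}-N)+N+1$), which it states without derivation, so your proof is correct and follows the same route.
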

\begin{proof}
\textit{The main complexity of Algorithm 1 is training complexity, i.e., the configuration of the neural network. Let $\mathcal {K}$ and $M_{k}$ denote the number of neuron layers and the neuron with index $k$, respectively. Then, the total complexity of Algorithm 1 is $\sum\limits_{k=1}^{\mathcal{K}}{M_{k}\times M_{k+1}}$. For Algorithm 2, the number of variables and constraints of Problem $P2$ are denoted by $\mathbb{N}=2+3N+S$ and $\mathbb{M}=3+2N+S$, respectively. Let $\mathbb{X}$ denote the iteration number of the inner loop and ${{\epsilon }_{1}}$ express the solution accuracy. Then the algorithmic complexity of the inner loop of Algorithm 2 is ${\mathrm O}\left( \mathbb{M}{{\mathbb{N}}^{2}}\mathbb{X}\left( \sqrt{\mathbb{N}}\log (\frac{1}{\epsilon }) \right) \right)$. By using $\mathbb{Y}$ to denote the iteration number of main loop, the total complexity of Algorithm 2 is ${\mathrm O}\left( \mathbb{M}{{\mathbb{N}}^{2}}\mathbb{X}\mathbb{Y}\left( \sqrt{\mathbb{N}}\log (\frac{1}{\epsilon }) \right) \right)$. Similarly, the complexity of the process for Problem $P3'''$ is ${\mathrm O}\left( \mathbb{Q}{{\mathbb{P}}^{2}}\mathbb{Z}\left( \sqrt{\mathbb{P}}\log (\frac{1}{\epsilon '}) \right) \right)$, where $\mathbb{P}={{N}^{2}}+N$ denotes the number of variables of Problem $P3$, and $\mathbb{Q}=3({{N}^{2}}-N)+N+1$ denotes the number of constraints of Problem $P3$, while $\mathbb{Z}$ and $\epsilon'$ denote the number of iterations as well as the solution accuracy of Problem $P3$, respectively. Therefore, the total time complexity of the proposed AISLE algorithm can be expressed by equation (\ref{eq39}), which is proportional to $N$ and $S$. Equation (\ref{eq39}) is polynomial and has low time complexity. Theorem 5 is proved.}
\end{proof}

\section{Numerical Results}
\begin{figure}
  \centering
  \includegraphics[width=0.60\linewidth]{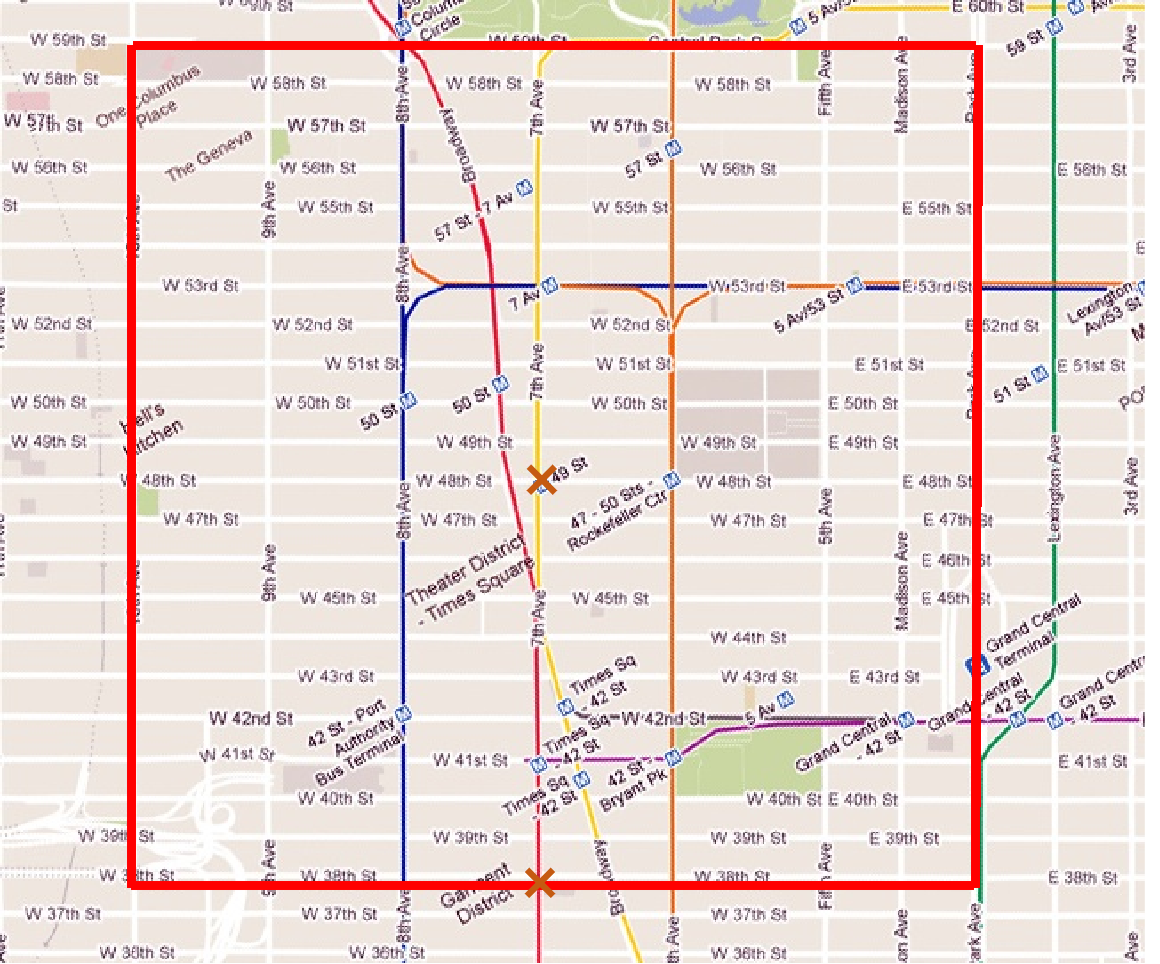}
  \caption{Manhattan city map.}

  \label{fig3}
\vspace{-5mm}
\end{figure}

\begin{figure*}[h]
    \centering

    \subfigure[2000]{\includegraphics[width=0.25\textwidth]{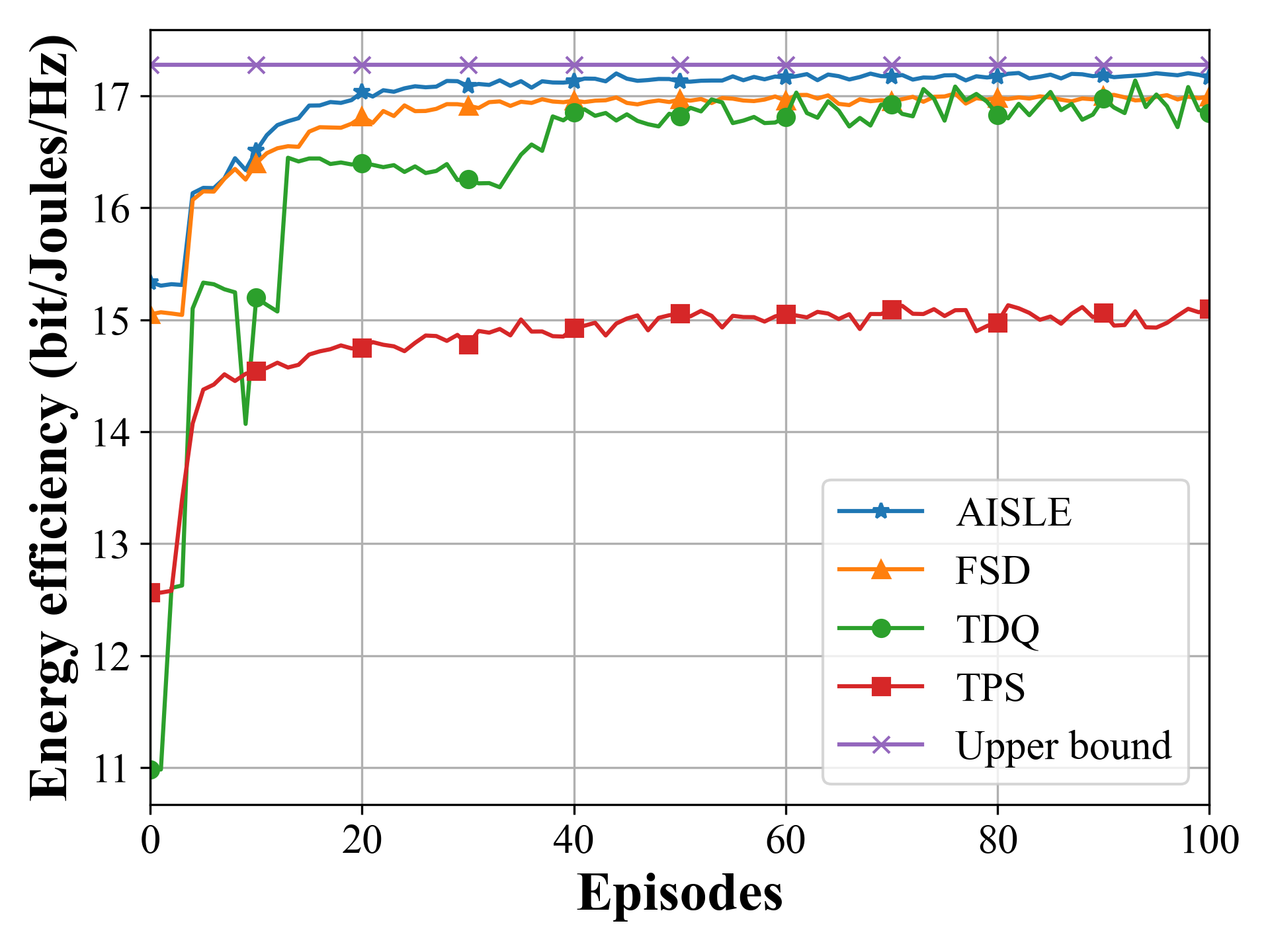}}
    \subfigure[3000]{\includegraphics[width=0.25\textwidth]{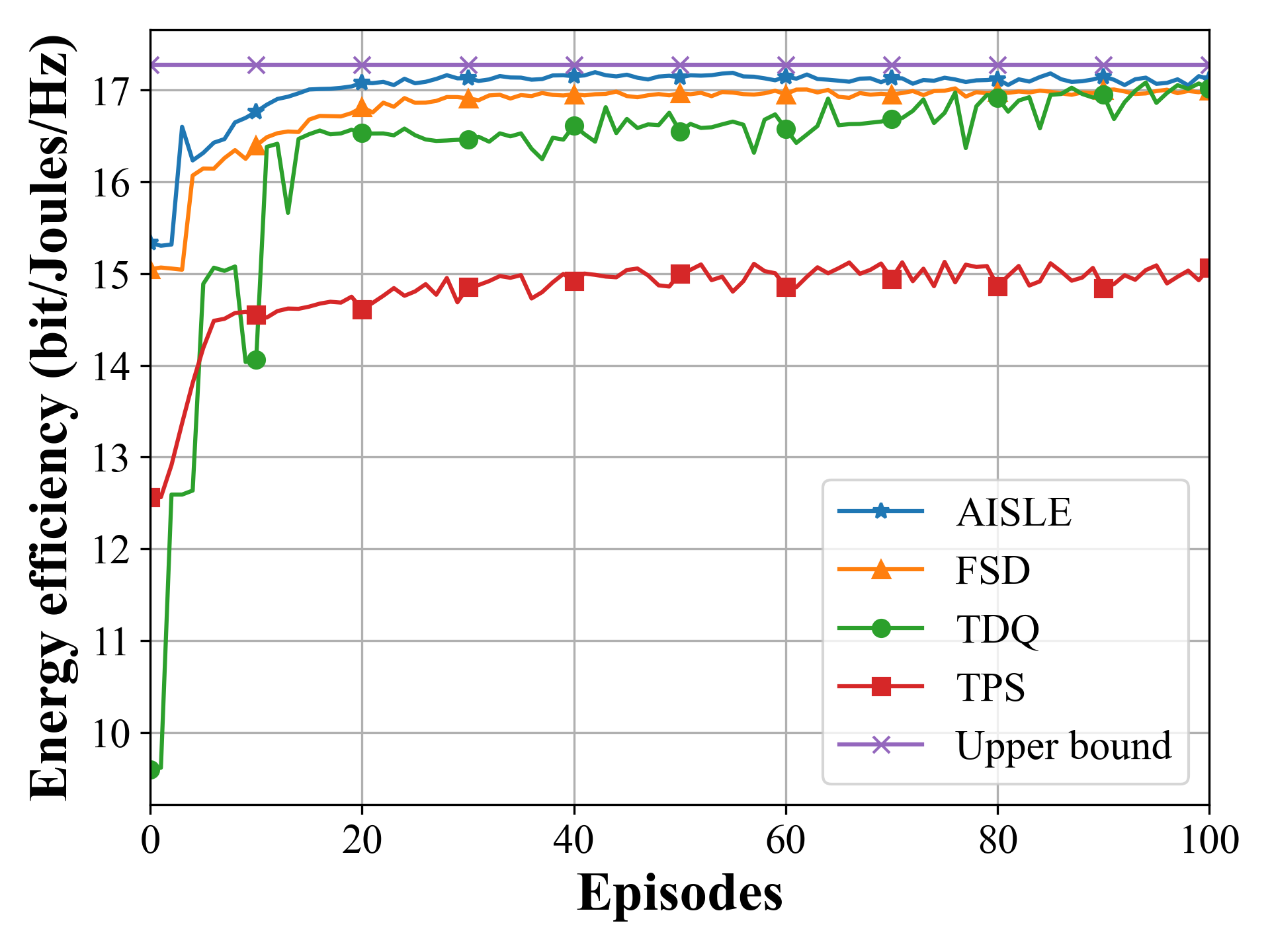}}
    \subfigure[4000]{\includegraphics[width=0.25\textwidth]{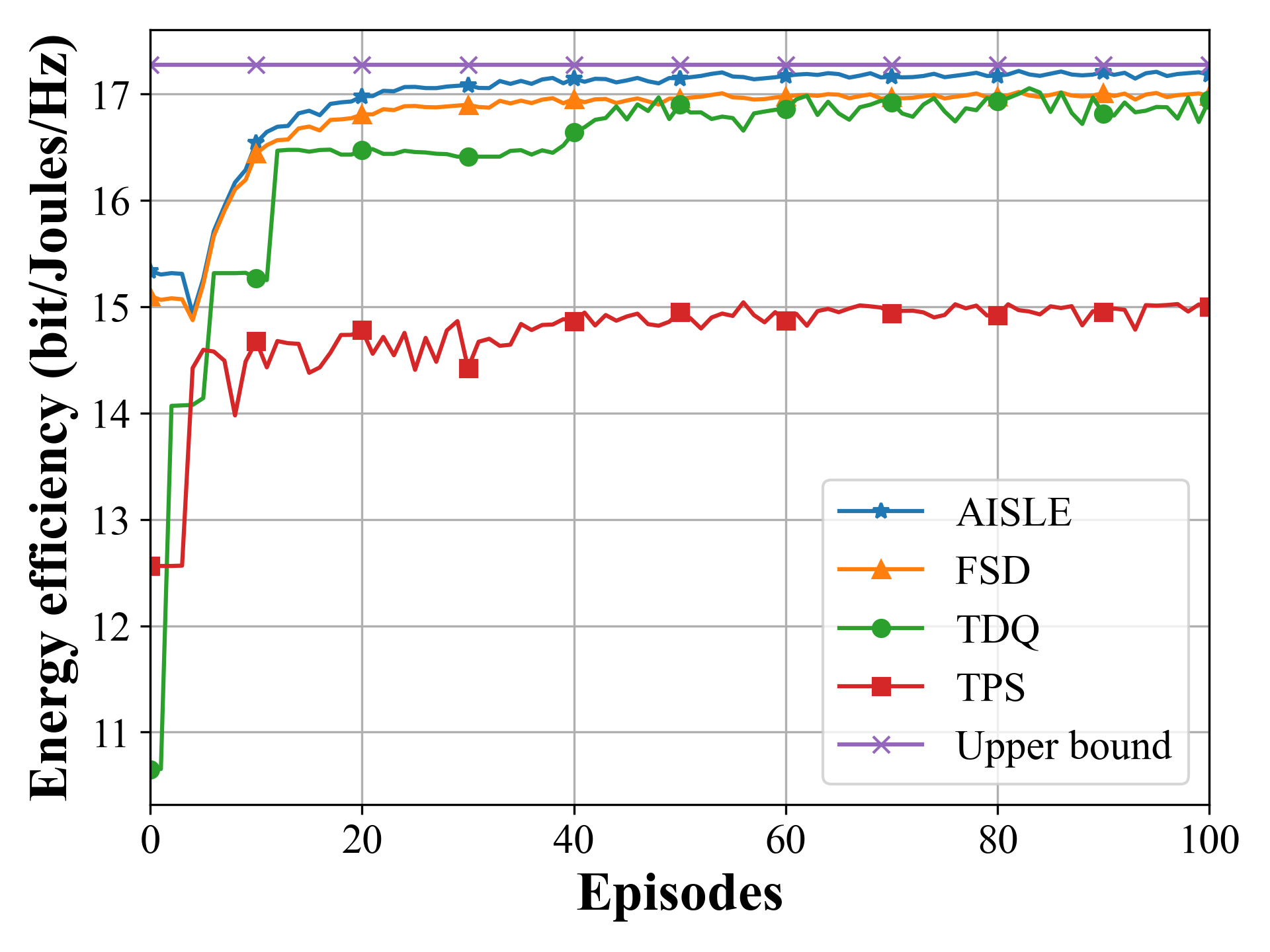}}

    \vspace{0.5cm}

   \subfigure[5000]{\includegraphics[width=0.25\textwidth]{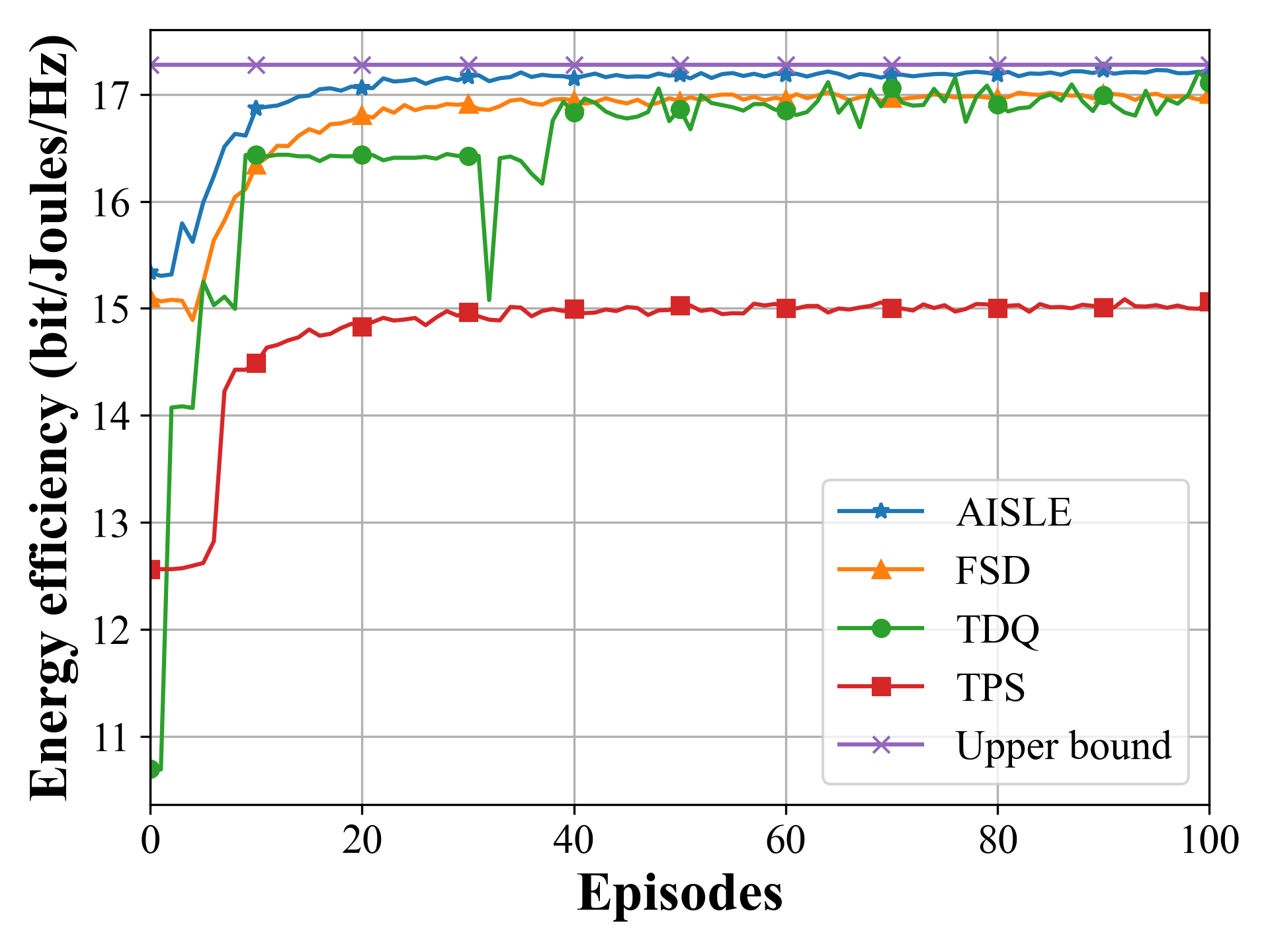}}
   \subfigure[6000]{\includegraphics[width=0.25\textwidth]{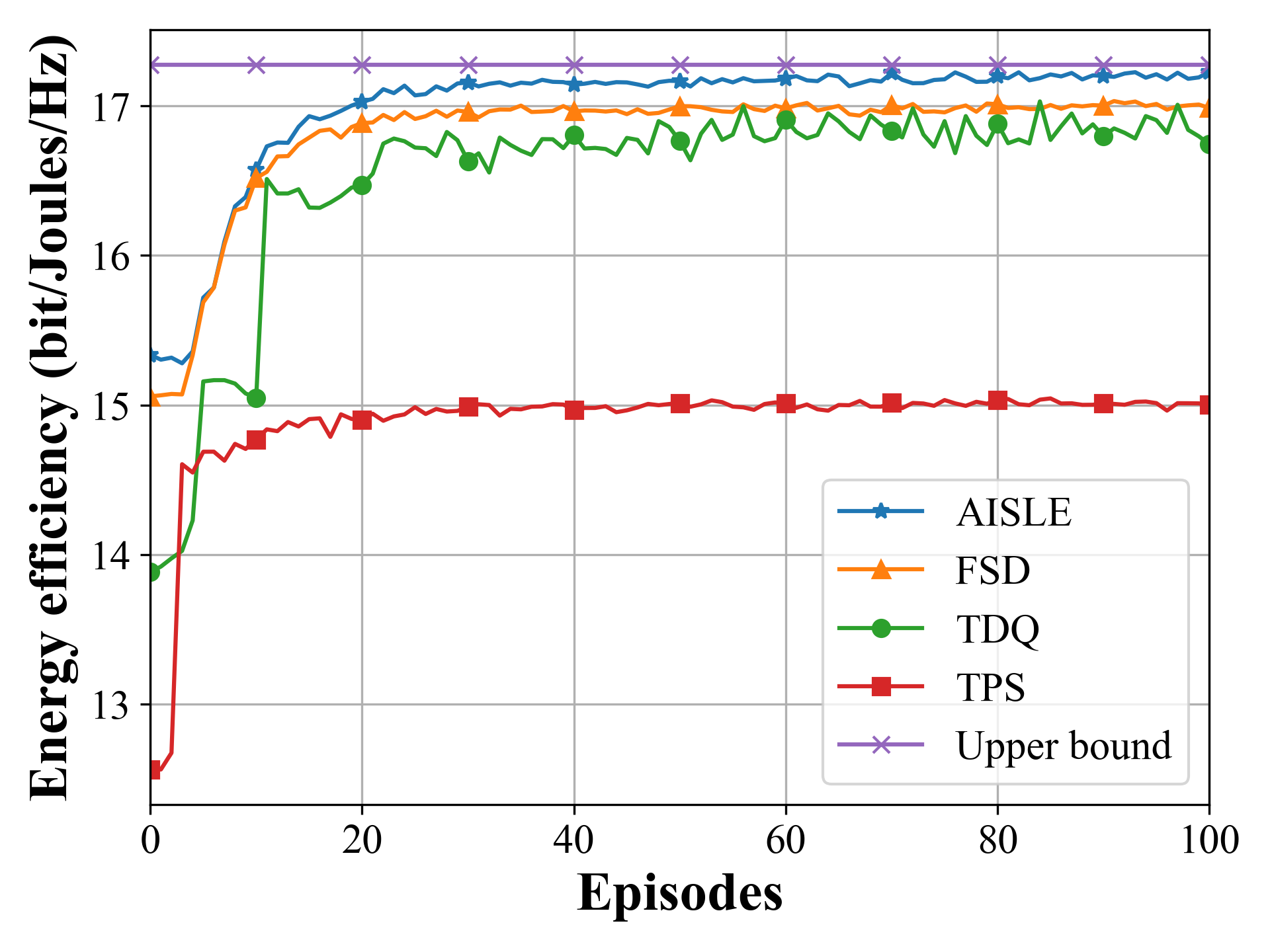}}
   \subfigure[Convergence time]{\includegraphics[width=0.25\textwidth]{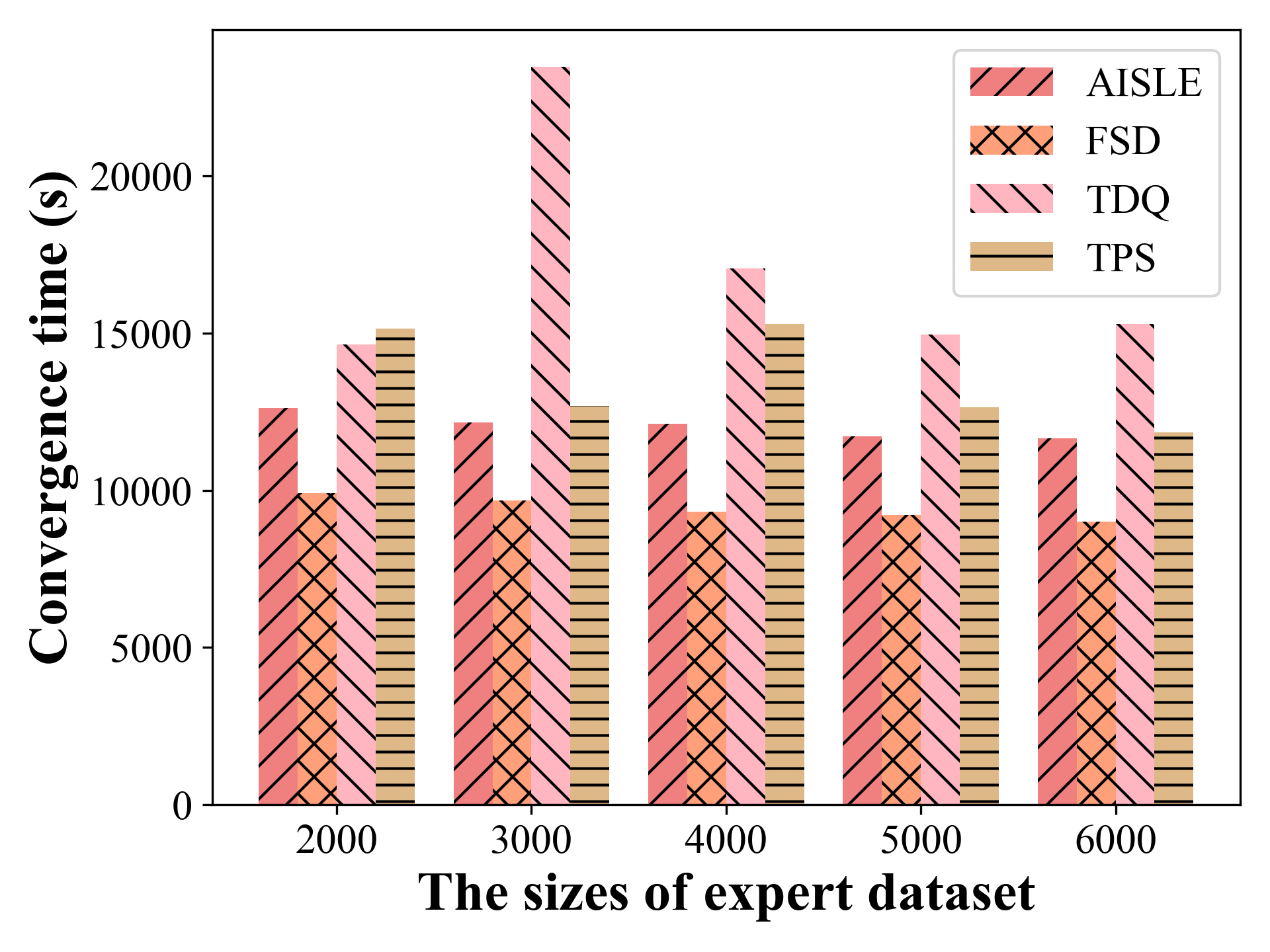}}

   \caption{Convergence performance with different expert dataset sizes.}
\end{figure*}


\indent In this section, simulation setup and performance results are presented to verify the effectiveness of AISLE.

\subsection{Simulation Setup}

\indent The simulation environment is constructed by Python 3.8, Pytorch 1.7.1, cvxpy 1.3.1, and gurobipy 10.0.1. As shown in Fig. 3, we validate the performance of AISLE on a Manhattan map. We choose the area of 500m$\times$500m indicated by the red line, where the points marked with ``X" are the deployment locations of IRSs, with randomly distributed users. And the number of users and IRSs is 3 and 2, respectively. The specific parameter settings of the simulation are referred to~\cite{9454446,9749020,9043712}. The locations of IRSs are (250 m, 250 m, 30 m) and (250 m, 0 m, 30 m), and the rest of the simulation parameters are set as follows. ${V}_{\max }=20$ m/s, $p_{max}=20$ dBm, ${J}_{b}={J}_{I}= 100$, $A=0.9$, ${\delta }^{2}=-80$ dBm, $\beta=10^{-5}$ W, $B=32$, $\eta =1$, $\gamma _{i}^{\text{UG}}=\gamma _{s,i}^{\text{RG}}=10$ dB, $\xi _{i}^{\text{UG}}=\xi _{s,i}^{\text{RG}}=2.5$. For expert datasets, we select random points different from the initial coordinate of the UAV as starting points to generate the expert dataset to avoid overfitting.

\indent Although existing related studies do not consider the combined optimization of energy consumption and transmission rate for UAV communication systems with multiple IRSs, we evaluate the proposed algorithm, AISLE, against the following three representative solutions:

\begin{itemize}
	\item TDQ~\cite{9869783}: It is a deep Q-network (DQN) and deep deterministic policy gradient (DDPG) based algorithm to optimize discrete flight trajectories and continuous flight trajectories, to maximize the energy efficiency of the system.
	
     \item TPS~\cite{10.1109/TWC.2022.3212830}: It is a double DQN (DDQN)-based algorithm to optimize trajectories and phase shifts of UAVs to maximize system transmit capacity.

	\item Fixed SIC decoding order (FSD): User association decision and trajectory optimization are the same as our algorithm, but fixed SIC decoding orders.

\end{itemize}
\subsection{Simulation Results}

\subsubsection{Performance under different expert dataset sizes}

\indent By generating expert datasets with different numbers of random UAV starting points, we can obtain different sizes of expert datasets, and an upper bound for Problem $P0$ can be obtained by generating expert datasets with UAV initial coordinates. Fig. 4 shows the performance of the AISLE algorithm with different sizes of expert datasets. We can observe that the AISLE algorithm obtains the best convergence and highest energy efficiency in all sizes of expert datasets, followed by FSD. That is because AISLE and FSD algorithms approximate the expert policy by learning directly from the expert datasets, which have faster convergence speeds compared to TDQ and TPS algorithms. The capacity of ER is constant in the AISLE algorithm, which ensures that a large amount of expert datasets can be extracted to train the model for each training. In contrast, TDQ and TPS algorithms continuously enlarge the capacity of ER when the number of episodes increases, making it difficult to utilize the expert datasets efficiently. Figs. 4(a)-3(e) show that when we gradually increase the size of the expert datasets, the performance of the AISLE algorithm is more stable than TDQ and TPS algorithms, with about 0.8\% away from the upper bound. While TDQ and TPS algorithms fluctuate with different sizes of expert datasets, we can observe that the TDQ algorithm eventually converges close to the AISLE algorithm in Figs. 4(b) and 4(d).

\indent In addition, Fig. 4(f) shows a comparison of the convergence time at different expert dataset sizes. The convergence time of the AISLE algorithm and FSD are on a slow decrease, where the AISLE algorithm decreases from 12619 seconds at an expert dataset size of 2000 to 11658 seconds at that of 6000. Note that the convergence time of FSD is shorter than that of the AISLE algorithm with different expert dataset sizes, because it simplifies Problem $P3$ by only optimizing power allocation. Thus, it requires less time to interact with the environment. We can observe that TQD and TPS exhibit random fluctuations with the increasing size of the expert dataset. This is because they train the model with a batch of state transitions obtained through random sampling across the entire ER.

\begin{figure*}[h]
	\centering
	\begin{minipage}{0.3\linewidth}
		\centering
		\includegraphics[width=0.80\linewidth]{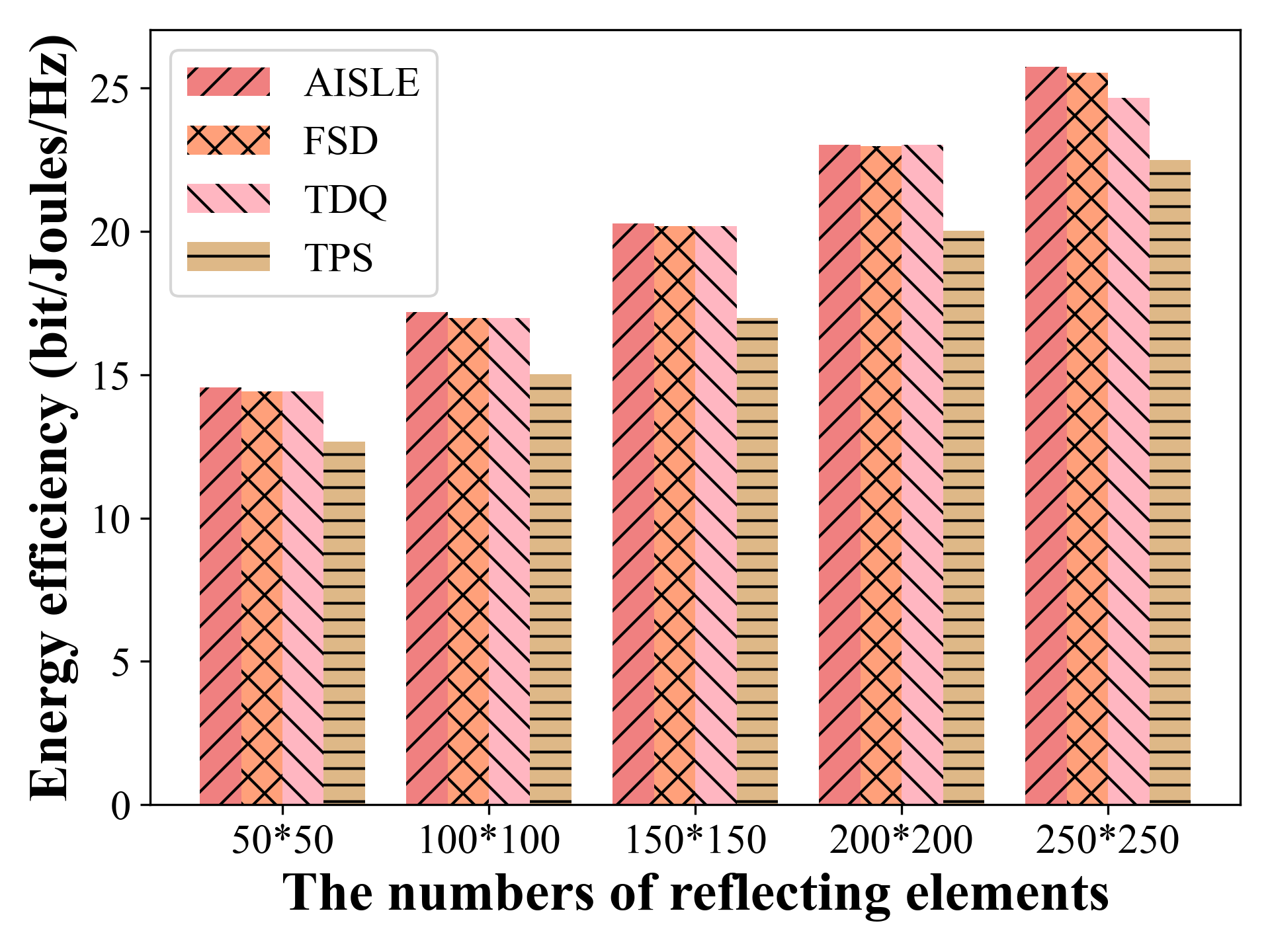}
		\caption{Performance with different numbers of reflecting elements.}
		\label{fig5}
	\end{minipage}
	\begin{minipage}{0.3\linewidth}
		\centering
		\includegraphics[width=0.80\linewidth]{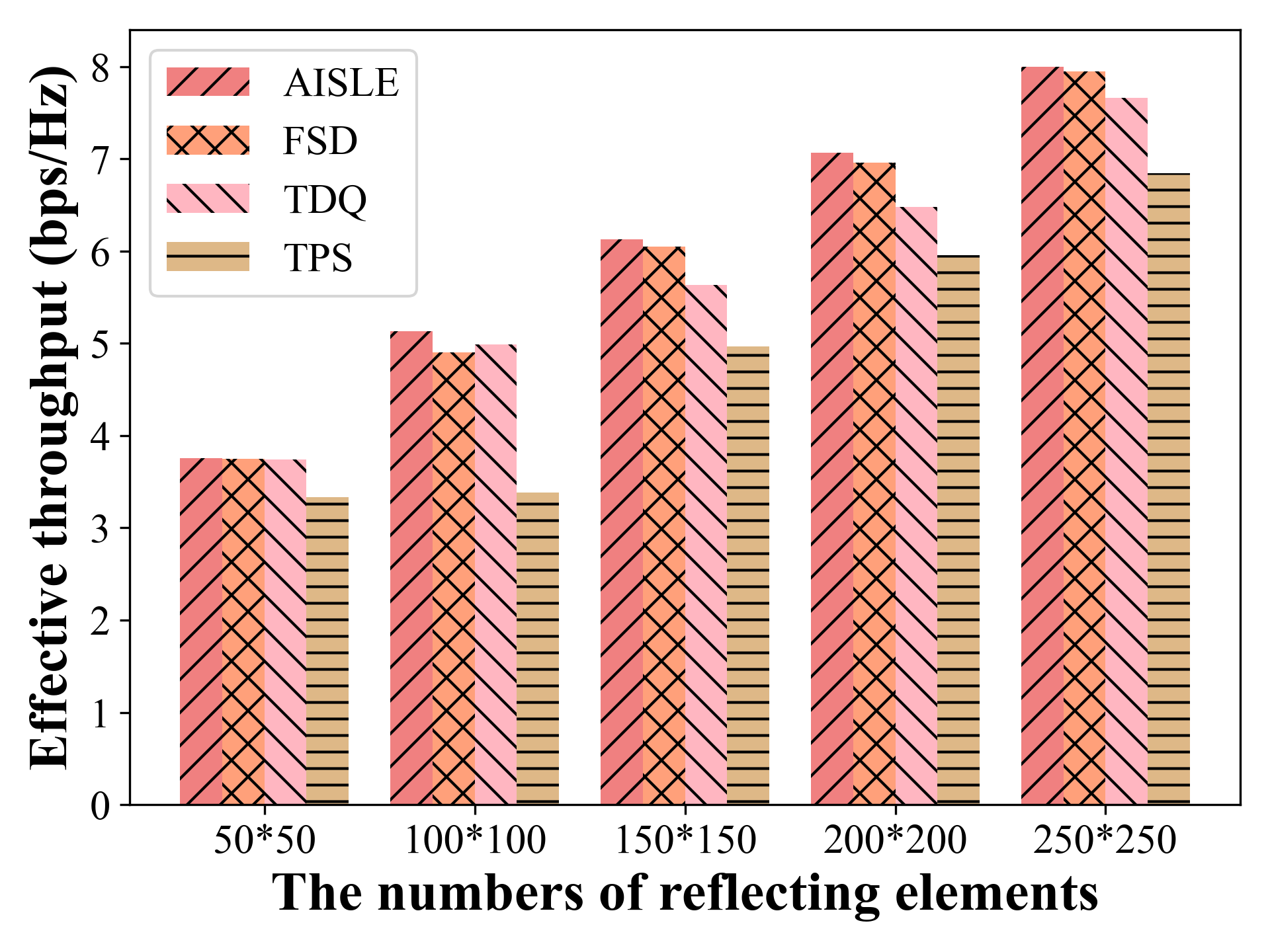}
		\caption{Performance on average effective throughput.}
		\label{fig6}
	\end{minipage}
	\begin{minipage}{0.3\linewidth}
		\centering
		\includegraphics[width=0.80\linewidth]{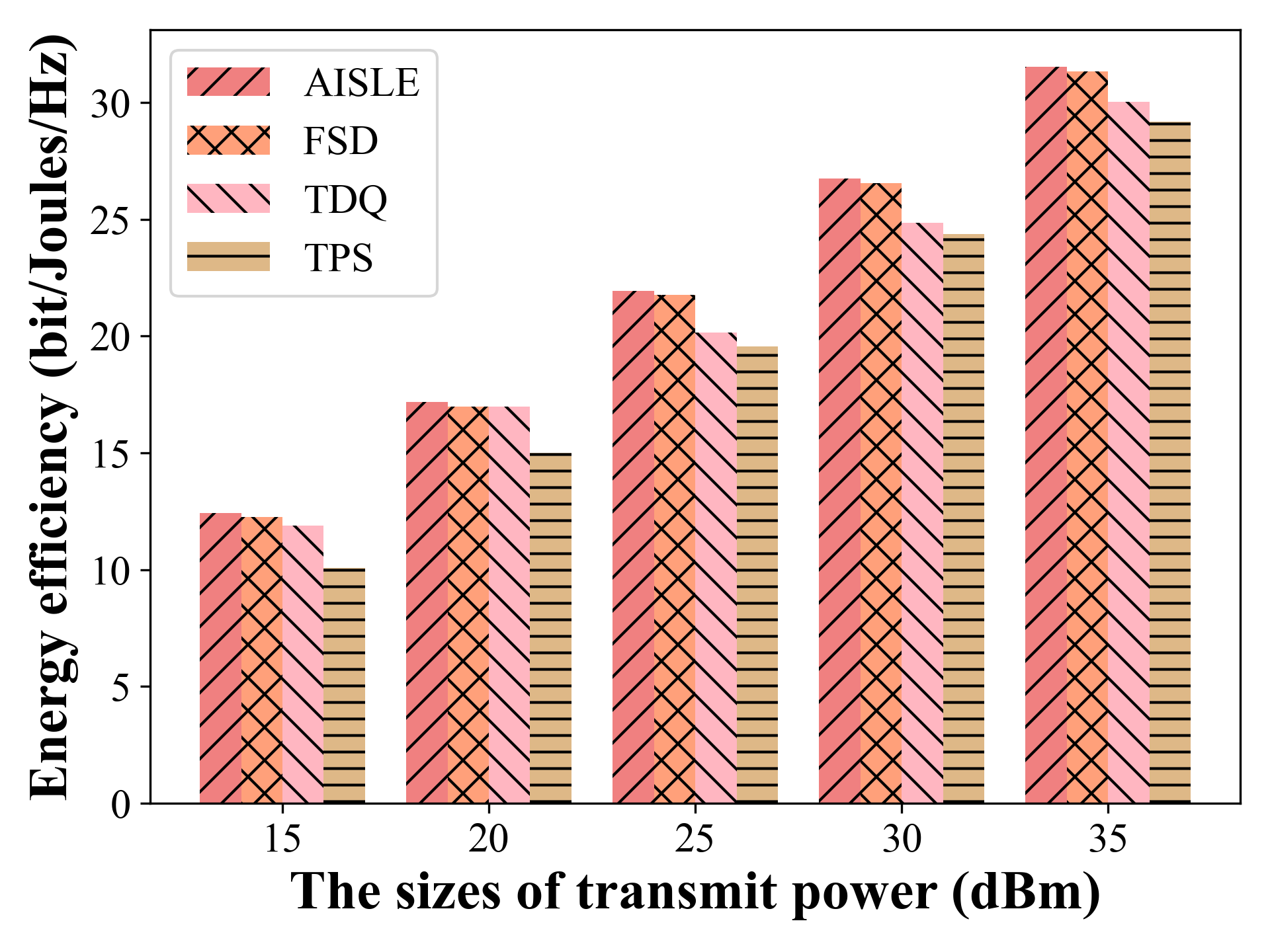}
		\caption{Performance with different transmit power.}
		\label{fig7}
	\end{minipage}

	\begin{minipage}{0.3\linewidth}
		\centering
		\includegraphics[width=0.80\linewidth]{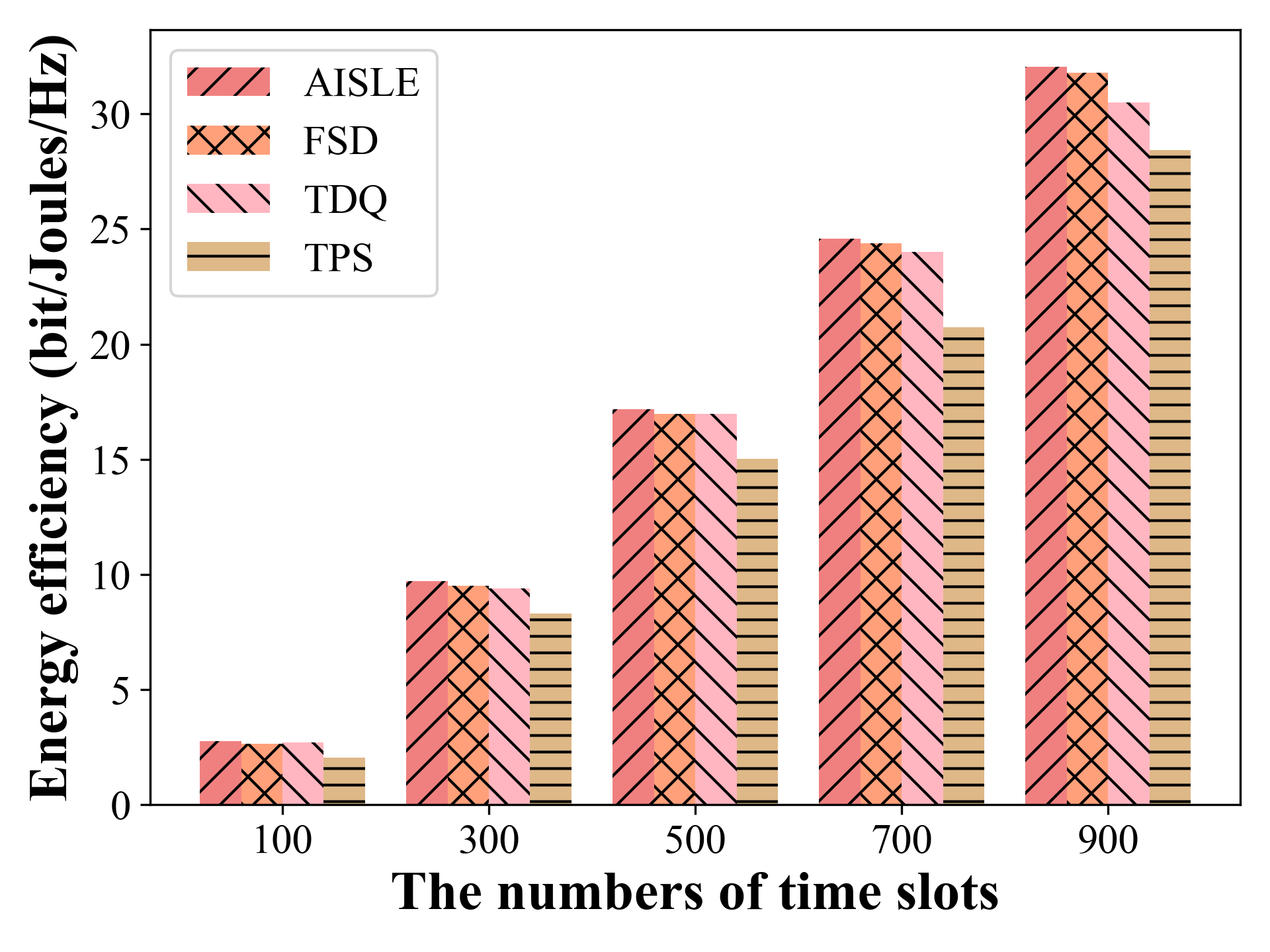}
		\caption{Performance with different numbers of time slots.}
		\label{fig8}
	\end{minipage}
	\begin{minipage}{0.3\linewidth}
		\centering
		\includegraphics[width=0.80\linewidth]{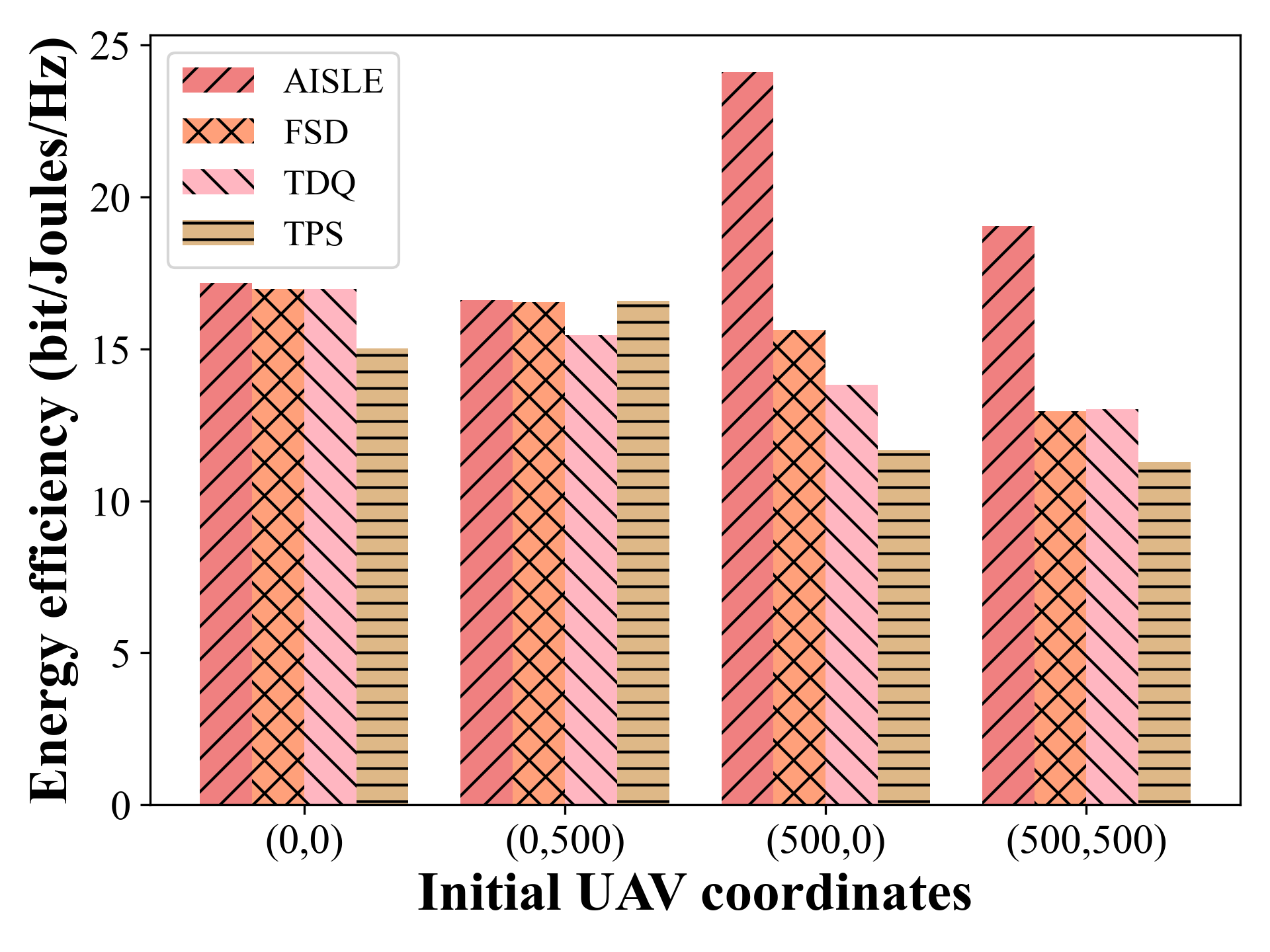}
		\caption{Performance with different initial UAV coordinates.}
		\label{fig9}
	\end{minipage}
\end{figure*}

\subsubsection{Performance under different numbers of reflecting elements}

\indent The performance on total system energy efficiency with different numbers of reflecting elements is shown in Fig. 5. It can be observed that TPS has the lowest energy efficiency, and the performance of FSD and TDQ is close, while AISLE obtains the highest energy efficiency. It is shown that the total system efficiency rises with the increasing number of reflecting elements. This arises from the channel gain expressed by equation (\ref{eq12}) is positively correlated with the number of reflecting elements. In addition, the second and third terms of equation (\ref{eq12}) increase with the number of reflecting elements. Hence, more reflective elements result in greater reflective array gain. Moreover, we can observe that the performance of AISLE is stable among different numbers of reflecting elements, while TDQ shows a degradation with 250$\times$250 reflecting elements. This is because the learning model of the AISLE algorithm is trained based on expert datasets and has good stability. Although the SIC decoding order is fixed in FSD, SCA is still applied to optimize power allocation. 

\indent Considering a minimum transmission rate threshold, we set an indicator function $\mathcal {I}\left( \cdot \right) $, i.e., $\mathcal{I}(R_{i} \ge R_{min})=1$ when $R_{i} \ge R_{min}$ and vice versa. Without loss of generality, we set $R_{min}=1$. From this, we define the average effective throughput $\tilde{R}=\frac{1}{T}\sum\limits_{t=1}^{T}{{{R}_{i}}[t]}\mathcal{I}({{R}_{i}}[t]\ge{{R}_{\min }}[t])$ and verify the effect with different numbers of IRS elements. From Fig. 6, we can observe that the performance of FSD and TDQ are close to AISLE with 50$\times$50 reflecting elements. This is due to the lower passive beamforming channel gain, resulting in a lower transmission rate and a higher probability of being less than the threshold. While the number of reflective elements is increasing, it can be observed that the AISLE algorithm achieves a significant performance in the average effective throughput. At the same time, the increase of the average effective throughput gradually flattens out, and the growth decreases from 1.4 bps/Hz at the beginning to 0.9 bps/Hz at 250$\times$250 reflecting elements. This is because the increasing number of IRS reflecting elements raises the probability that the transmission rate is higher than the threshold, and the increasing value of this probability decreases with the increasing number of IRS reflecting elements. Hence, it can be found that the average effective throughput can be effectively improved by increasing the number of IRS reflecting elements.

\subsubsection{Performance under different transmit power}

\indent Fig. 7 shows the performance of algorithms with different UAV transmit power. Similar to that shown in Fig. 6, the AISLE algorithm has the highest energy efficiency over different transmit power, followed by FSD and TDQ, and TPS has the lowest energy efficiency. It can be found that the performance of FSD is close to AISLE for different transmit power. When the transmit power increases, the energy efficiency of AISLE increases as well. This indicates that the obtained transmission rate is more substantial compared to the increase in total energy consumption. By expanding the number of reflecting elements from 100$\times$100 by a factor of 6.25 to 250$\times$250, the system energy efficiency increases by about 49.9\%, while expanding the transmit power by a factor of 10 from 20 dBm to 30 dBm increases the system energy efficiency only by 55.6\%. From the above analysis, it is clear that the gain in energy efficiency obtained by increasing the number of reflecting elements is significant. Considering the easy deployment and low-cost properties of IRSs, the system cost can be reduced to some extent by increasing the number of reflecting elements compared to the costly way of increasing transmit power.

\subsubsection{Performance under different numbers of time slots}

\indent As shown in Fig. 8, the AISLE algorithm obtains the highest system energy efficiency with different numbers of time slots. Similar to Figs. 5, 6 and 7, the performance of FSD and TDQ is close to that of the AISLE algorithm, and the performance of TPS lags. It can be noticed that the performance gap between AISLE and TDQ algorithms widens with the increase in service time, from 0.08 bit/Joules/Hz to 1.55 bit/Joules/Hz. Moreover, it can be seen that the increment of the system energy efficiency of the AISLE algorithm is about 7 bits/Joules/Hz for each additional 200 time slot. Therefore, it can be verified that the performance of the AISLE algorithm remains stable over different time slots. This is because AISLE learns from expert datasets and is capable of long-term optimization.

\subsubsection{Performance under different initial UAV coordinates}

\indent By changing the initial coordinates of the UAV, we verify the energy efficiency performance, as shown in Fig. 9. It can be seen that the performance of different algorithms changes with different initial UAV coordinates. The performance of the AISLE algorithm remains stable and achieves the highest energy efficiency at different initial UAV coordinates. The performance of FSD is close to that of AISLE when the UAV is initially positioned at coordinates $(0, 0)$ and $(0, 500)$, while the performance gap with the AISLE algorithm widens at coordinates $(500, 0)$ and $(500, 500)$. The reason is that different initial coordinates of UAVs result in varying channel conditions, and the SIC decoding orders of the FSD algorithm is fixed, which leads to poor performance stability. In addition, we can find that TPS has better performance than TDQ on $(0, 500)$ and lags on other UAV initial coordinates. It implies that the variation of the environment has a greater impact on their performance and they are overfitted on coordinate $(0, 0)$. The training process of the AISLE algorithm guarantees that half of the batches consist of expert datasets, which means the AISLE algorithm can make decisions close to the expert strategy by learning from expert data, to improve its generalization ability.

\section{Conclusion}
\indent This paper considered a multi-IRS assisted UAV communication system, in which the UAV performs as a temporary BS to serve ground users. In this scenario, we formulated an optimization problem to maximize the system energy efficiency by jointly optimizing user association, UAV trajectory, power allocation, and SIC decoding orders. First, we decomposed the optimization problem into three subproblems, i.e., user association optimization problem, UAV trajectory optimization problem, and joint power allocation and SIC decoding order optimization problem. Then, we designed an alternating optimization algorithm based on inverse soft-Q learning and incorporated Dinkelbach and penalty methods. Finally, the theoretical analysis and performance results showed that AISLE has obvious advantages in system energy efficiency and average effective throughput.

\bibliographystyle{ieeetr}
\bibliography{huhaob}

\end{document}